\documentclass[a4paper]{article}
\usepackage{amsthm,amsmath,amssymb,fullpage}

\newtheorem{thm}{Theorem}[section]
\newtheorem{prop}[thm]{Proposition}
\newtheorem{lem}[thm]{Lemma}
\newtheorem{cor}[thm]{Corollary}

\newtheorem{rem}[thm]{Remark}
\newtheorem{conj}[thm]{Conjecture}
\numberwithin{equation}{section}

\allowdisplaybreaks

\title{Poisson structures for a similarity reduction of the Drinfeld-Sokolov hierarchy of type $A$}
\author{Takao Suzuki \thanks{Department of Mathematics, Kindai University, 3-4-1 Kowakae, Higashi-Osaka, Osaka 577-8502, Japan. E-mail: suzuki@math.kindai.ac.jp}}
\date{}

\begin{document}

\maketitle

\begin{abstract}
In this article we consider the Drinfeld-Sokolov hierarchy of type $A$ and its similarity reduction.
Then the similarity reduction is expressed as $2$ types of Hamiltonian systems, one is with a Poisson bracket and another a Poisson bracket of canonical coordinates.
We also investigate a connection between the similarity reduction and the isomonodromy deformation system.

Key Words: Affine Lie algebra, Integrable system, Painlev\'{e} equation.

2010 Mathematics Subject Classification: 17B80, 34M55, 37K10.
\end{abstract}

\section{Introduction}

The Drinfeld-Sokolov (DS) hierarchy was first proposed in \cite{DS} as an extension of the Korteweg-de Vries hierarchy for the affine Lie algebras.
They used the principal Heisenberg subalgebras of the affine Lie algebras to formulate the hierarchies.
Afterward, the DS hierarchy was generalized to arbitrary Heisenberg subalgebras in \cite{GHM}.
On the other hand, it was shown in \cite{KP} that the isomorphism classes of the Heisenberg subalgebras are in one-to-one correspondence with the conjugacy classes of the finite Weyl groups.
Moreover, the conjugacy classes of the finite Weyl groups were classified in \cite{C}.
Hence we can classify the DS hierarchies, however it is another problem to give explicit formulas of the Heisenberg subalgebras.
For classical types, a method of formulation using the conjugacy classes of the finite Weyl groups was established in \cite{DF}.
Especially for type $A$, a formulation using block partitions of matrices was given in \cite{KL}.
Those two formulations give $\mathbb{Z}$-gradations of the affine Lie algebras simultaneously.

\begin{table}
\[\begin{array}{|c|c|c|c|}\hline
	\text{Lie algebra} & \text{Conjugacy Class} & \text{Painlev\'{e}-type system} & \text{Reference} \\\hline\hline
	A_1 & (2) & P_{\rm II} & \cite{AS} \\[2pt]
	& (1,1) & P_{\rm IV} & \cite{KK1} \\[2pt]\hline
	A_2 & (3) & P_{\rm IV} & \cite{NouY2} \\[2pt]
	& (2,1) & P_{\rm V} & \cite{KIK} \\[2pt]
	& (1,1,1) & P_{\rm VI} & \cite{KK2} \\[2pt]\hline
	A_{2n-1} & (2n) & P(A_{2n-1}) & \cite{NouY2} \\[2pt]
	(n\geq2) & (2n-1,1) & P(A_{2n}) & \cite{FS4,Suz1} \\[2pt]
	& (n,n) & \text{FST} & \cite{FS4,Suz1} \\[2pt]
	& (1^{2n}) & \text{Garnier} & \cite{KK2} \\[2pt]\hline
	A_{2n} & (2n+1) & P(A_{2n}) & \cite{NouY2} \\[2pt]
	(n\geq2)& (2n,1) & P(A_{2n+1}) & \cite{FS4,Suz1} \\[2pt]
	& (n,n,1) & \text{FST} & \cite{FS4,Suz1} \\[2pt]
	& (1^{2n+1}) & \text{Garnier} & \cite{KK2} \\[2pt]\hline
	D^{(1)}_4 & (\bar{2},\bar{2}) & P_{\rm{VI}} & \cite{FS1} \\[2pt]\hline
	D^{(1)}_{2n+2} & (\overline{n+1},\overline{n+1}) & \text{Sasano} & \cite{FS2} \\[2pt]\hline
	E^{(1)}_6 & E_6(a_1) &  & \cite{FS3} \\[2pt]\hline
\end{array}\]
\caption{Connections between the DS hierarchies and the Painlev\'e-type systems}\label{Fig:Connection_DS_Painleve}
\end{table}

Some of the DS hierarchies reduce to Painlev\'e equations $P_{\rm J}\ ({\rm J}={\rm II},\ldots,{\rm VI})$ and their generalizations via a operation called a similarity reduction.
We list the known connections between the DS hierarchies and the Painlev\'e-type systems in Table \ref{Fig:Connection_DS_Painleve}.
We give some remarks about this table.
The systems named $P(A_{2n})$ and $P(A_{2n+1})$ are generalizations of $P_{\rm IV}$ and $P_{\rm V}$ respectively.
They were proposed by several researchers in \cite{A,NouY1,Sch,VS} independently.
The system called the FST system was derived from the UC hierarchy in \cite{T} independently.
The systems $P(A_{2n})$ and $P(A_{2n+1})$ were also derived from the UC hierarchies.
The system called the Sasano system was first proposed in \cite{Sas} as a generalization of $P_{\rm VI}$ for the symmetry and the initial value space.

In this article we consider the DS hierarchy of type $A_{(m+1)(n+1)-1}$ corresponding to the conjugacy class $(n+1,\ldots,n+1)$.
A formulation of the DS hierarchy and its similarity reduction is largely established by the previous works.
However, the obtained system contains some invariants in general and we have to find them each time.
Besides, we have not established a universal method to express the similarity reduction as a Hamiltonian system yet.
We give an answer to those problems for the conjugacy class $(n+1,\ldots,n+1)$.
Then the similarity reduction is expressed as $2$ types of Hamiltonian systems, one is with a Lie Poisson bracket and another a Poisson bracket of canonical coordinates.

Another aim of this article is to clarify the connection between the DS hierarchies and the isomonodromy deformation systems.
Recently, a classification theory of the isomonodromy deformation systems has been established.
It was shown in \cite{Kos,O} that irreducible Fuchsian systems with a fixed number of accessory parameters can be reduced to those of finite types by the addition and the middle convolution.
It was also shown in \cite{HF} that the isomonodromy deformation system of the Fuchsian system is invariant under the addition and the middle convolution.
Hence we can classify the isomonodromy deformation systems in terms of a spectral type which is a multiplicative data of eigenvalues of residue matrices.
Some concrete calculations were given in \cite{Sak,Suz2} based on this classification.
We expect that those systems are also derived from the DS hierarchy.
In this article we focus on the similarity reduction of the case $(m,n)=(2,1)$ and reduce it to the isomonodromy deformation system with the spectral type $31,22,211,1111$ by using a Laplace transformation.

This article is organized as follows.
In Section \ref{Sec:AffLie}, we recall the definition of the affine Lie algebra of type $A$.
In Section \ref{Sec:DS}, we formulate the DS hierarchy of type $A$ and its similarity reduction.
We also express the similarity reduction as a Hamiltonian system.
In Section \ref{Sec:SR}, we transform the similarity reduction to that on the Borel subalgebra by a gauge transformation.
We also express the transformed similarity reduction as a Hamiltonian system.
In Section \ref{Sec:IMDS}, we derive the isomonodromy deformation system with the spectral type $31,22,211,1111$.

\section{Affine Lie algebra of type $A$}\label{Sec:AffLie}

Let $m,n\in\mathbb{Z}_{>0}$ and $N=(m+1)(n+1)-1$.
We consider the affine Lie algebra $\mathfrak{g}=\mathfrak{g}(A^{(1)}_N)$.
The generalized Cartan matrix $A=\left[a_{i,j}\right]_{i,j=0}^{N}$ for $\mathfrak{g}$ is given by
\[\begin{array}{ll}
	a_{i,i}=2 & (i=0,\ldots,N), \\[4pt]
	a_{i,i+1}=a_{N,0}=a_{i+1,i}=a_{0,N}=-1 & (i=0,\ldots,N-1), \\[4pt]
	a_{i,j}=0 & (\text{otherwise}).
\end{array}\]
The Chevalley generators $\alpha^{\vee}_i,e_i,f_i\ (i=0,\ldots,N)$ and the scaling element $d$ of $\mathfrak{g}$ satisfy the fundamental relations
\begin{align*}
	&(\mathrm{ad}\,e_i)^{1-a_{i,j}}(e_j) = 0,\quad
	(\mathrm{ad}\,f_i)^{1-a_{i,j}}(f_j) = 0\quad (i\neq j), \\
	&[\alpha^{\vee}_i,\alpha^{\vee}_j] = 0,\quad
	[\alpha^{\vee}_i,e_j] = a_{i,j}e_j,\quad
	[\alpha^{\vee}_i,f_j] = -a_{i,j}f_j,\quad
	[e_i,f_j] = \delta_{i,j}\alpha^{\vee}_i, \\
	&[d,\alpha^{\vee}_i] = 0,\quad
	[d,e_i] = \delta_{i,0}e_i,\quad
	[d,f_i] = -\delta_{i,0}f_i\quad (i,j=0,\ldots,N),
\end{align*}
where $[\cdot,\cdot]:\mathfrak{g}\times\mathfrak{g}\to\mathfrak{g}$ is the Lie bracket.
We assume that the indices of $\alpha^{\vee}_i,e_i,f_i$ are congruent modulo $N+1$.
The canonical central element is given by
\[
	K = \sum_{i=0}^{N}\alpha^{\vee}_i.
\]
Note that
\[
	[\alpha^{\vee}_i,K] = 0,\quad
	[e_i,K] = 0,\quad
	[f_i,K] = 0,\quad
	[d,K] = 0\quad (i=0,\ldots,N).
\]
The normalized invariant form $(\cdot|\cdot):\mathfrak{g}\times\mathfrak{g}\to\mathbb{C}$ is defined by
\begin{align*}
	&(\alpha^{\vee}_i|\alpha^{\vee}_j) = a_{i,j},\quad
	(e_i|f_j) = \delta_{i,j},\quad
	(\alpha^{\vee}_i|e_j) = 0,\quad
	(\alpha^{\vee}_i|f_j) = 0, \\
	&(d|d) = 0,\quad
	(d|\alpha^{\vee}_i) = \delta_{0,i},\quad
	(d|e_i) = 0,\quad
	(d|f_i) = 0, \\
	&(K|K) = 0,\quad
	(K|\alpha^{\vee}_i) = 0,\quad
	(K|e_i) = 0,\quad
	(K|f_i) = 0,\quad
	(d|K) = 1\quad (i,j=0,\ldots,N).
\end{align*}
Note that the Lie bracket and the normalized invariant form satisfy a relation
\[
	(x|[y,z]) = ([x,y]|z).
\]

The Cartan subalgebra of $\mathfrak{g}$ is given by
\[
	\mathfrak{h} = \bigoplus_{i=0}^{N}\mathbb{C}\alpha^{\vee}_i \oplus \mathbb{C}d.
\]
The Borel subalgebra of $\mathfrak{g}$ are given by
\[
	\mathfrak{b}_{\pm} = \mathfrak{n}_{\pm}\oplus\mathfrak{h},
\]
where $\mathfrak{n}_{+}$ and $\mathfrak{n}_{-}$ are the subalgebras of $\mathfrak{g}$ generated by $e_0,\ldots,e_N$ and $f_0,\ldots,f_N$ respectively.
Let
\[
	\Lambda_i = \sum_{j=0}^{n}\mathrm{ad}\,e_{i+j(m+1)}\,\mathrm{ad}\,e_{i+1+j(m+1)}\,\ldots\,\mathrm{ad}\,e_{i+m-1+j(m+1)}(e_{i+m+j(m+1)})\quad (i=1,\ldots,m+1).
\]
We define the Heisenberg subalgebra of $\mathfrak{g}$ corresponding to the partition $(n+1,\ldots,n+1)$ of $N+1$ by
\[
	\mathfrak{s} = \left\{x\in\mathfrak{g}\bigm|[x,\Lambda_1]=0\right\}.
\]
Note that
\[
	[\Lambda_i,\Lambda_j] = 0\quad (i,j=1,\ldots,m+1).
\]

Let
\[
	\mathbf{s} = (s_0,\ldots,s_N),\quad
	s_i = \left\{\begin{array}{cc}1&(i\in(m+1)\mathbb{Z})\\[4pt]0&(i\notin(m+1)\mathbb{Z})\end{array}\right..
\]
Then we can take an element $\vartheta\in\mathfrak{h}$ such that
\[
	(\vartheta|\alpha^{\vee}_i) = s_i\quad (i=0,\ldots,N).
\]
For the explicit formula of $\vartheta$, see \cite{FS4}.
The $\mathbb{Z}$-gradation of type $\mathbf{s}$ is defined by
\[
	\mathfrak{g} = \bigoplus_{k\in\mathbb{Z}}\mathfrak{g}_k,\quad
	\mathfrak{g}_k = \left\{x\in\mathfrak{g}\bigm|[\vartheta,x]=kx\right\}.
\]
Note that
\begin{align*}
	&\deg\mathfrak{h} = \deg e_i = \deg f_i = 0\quad (i\notin(m+1)\mathbb{Z}), \\
	&\deg e_i = 1,\quad \deg f_i = -1\quad (i\in(m+1)\mathbb{Z}).
\end{align*}
Since
\[
	[\vartheta,\Lambda_i] = \Lambda_i\quad (i,j=1,\ldots,m+1),
\]
the Heisenberg subalgebra $\mathfrak{s}$ is consistent with this gradation as
\[
	\mathfrak{s} = \bigoplus_{k\in\mathbb{Z}}\mathfrak{s}_k,\quad
	\mathfrak{s}_k = \left\{x\in\mathfrak{s}\bigm|[\vartheta,x]=kx\right\}.
\]
Note that
\[
	\deg K = 0,\quad
	\deg\Lambda_i = 1\quad (i=1,\ldots,m+1).
\]

The affine Lie algebra $\mathfrak{g}$ can be identified with a central extension of a loop algebra $\widehat{\mathfrak{sl}}_{N+1}=\mathfrak{sl}_{N+1}[z,z^{-1}]\oplus\mathbb{C}z\frac{d}{dz}\oplus\mathbb{C}K$.
The Chevalley generators and the scaling element of $\widehat{\mathfrak{sl}}_{N+1}$ are given by
\begin{align*}
	&\alpha^{\vee}_0 = K - E_{1,1} + E_{N+1,N+1},\quad
	e_0 = zE_{N+1,1},\quad
	f_0 = z^{-1}E_{1,N+1}, \\
	&\alpha^{\vee}_i = E_{i,i} - E_{i+1,i+1},\quad
	e_i = E_{i,i+1},\quad
	f_i = E_{i+1,i}\quad (i=1,\ldots,N),\quad
	d = z\frac{d}{dz},
\end{align*}
where $E_{i,j}$ is a $(N+1)\times(N+1)$ matrix with $1$ in $(i,j)$-th entry and $0$ elsewhere.
The Lie bracket for $\widehat{\mathfrak{sl}}_{N+1}$ is given by
\begin{align*}
	&[z^kX,z^lY] = z^{k+l}(XY-YX) + \delta_{k+l,0}\,k\,\mathrm{tr}(XY)K,\quad
	[K,z^kX] = 0, \\
	&[z\frac{d}{dz},z^kX] = kz^kX,\quad
	[z\frac{d}{dz},K] = 0\quad (X,Y\in\mathfrak{gl}_{N+1}).
\end{align*}
The normalized invariant form for $\widehat{\mathfrak{sl}}_{N+1}$ is given by
\begin{align*}
	&(z^kX|z^lY) = \delta_{k+l,0}\,\mathrm{tr}(XY),\quad
	(K|z^kX) = 0,\quad
	(K|K) = 0, \\
	&(z\frac{d}{dz}|z^kX) = 0,\quad
	(z\frac{d}{dz}|K) = 1,\quad
	(z\frac{d}{dz}|z\frac{d}{dz}) = 0\quad (X,Y\in\mathfrak{gl}_{N+1}).
\end{align*}

\section{Drinfeld-Sokolov hierarchy of type $A$}\label{Sec:DS}

In this section we formulate the DS hierarchy and its similarity reduction on the subalgebra $\mathfrak{g}_0$.
We also express the similarity reduction as a Hamiltonian system.

\subsection{DS hierarchy}

Let $t_i\ (i=1,\ldots,m+1)$ be independent variables corresponding to the generators $\Lambda_i\in\mathfrak{s}$.
We consider a function
\[
	W = \exp\left(\sum_{k=1}^{\infty}w_k\right),\quad
	w_k \in \mathfrak{g}_{-k}.
\]
In the following we use a conventional form
\[
	e^xye^{-x} = \exp(\mathrm{ad}\,x)(y) = \sum_{k=0}^{\infty}\frac{1}{k!}(\mathrm{ad}\,x)^k(y)\quad (x,y\in\mathfrak{g}).
\]
Let $\partial_i\ (i=1,\ldots,m+1)$ be partial differential operators for $t_i$.
Assume that Sato equations
\begin{equation}\label{Eq:Sato}
	U_i = \partial_i - \Lambda_i - W(\partial_i-\Lambda_i)W^{-1} \in \mathfrak{g}_0\quad (i=1,\ldots,m+1)
\end{equation}
hold.
Then we obtain Zakharov-Shabat equations
\begin{equation}\label{Eq:ZS}
	[\partial_i-U_i-\Lambda_i,\partial_j-U_j-\Lambda_j] = 0\quad (i,j=1,\ldots,m+1),
\end{equation}
or equivalently,
\begin{equation}\label{Eq:DS1}
	\partial_i(U_j) - \partial_j(U_i) + [U_j,U_i] = 0,\quad
	[U_i,\Lambda_j] - [U_j,\Lambda_i] = 0\quad (i,j=1,\ldots,m+1).
\end{equation}
Note that system \eqref{Eq:ZS} is derived via a gauge transformation
\[
	W[\partial_i-\Lambda_i,\partial_j-\Lambda_j]W^{-1} = 0.
\]

\begin{lem}
We have relations
\begin{equation}\label{Eq:DS2}
	2(\vartheta|\partial_i(U_j)) + (U_i|U_j) = 0\quad (i,j=1,\ldots,m+1).
\end{equation}
\end{lem}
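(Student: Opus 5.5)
The plan is to read off from the Sato equations \eqref{Eq:Sato} both $U_i$ and a constraint on $\partial_i(w_1)$ by comparing graded components, and then to evaluate $(\vartheta|\partial_i(U_j))$ using only the invariance $(x|[y,z])=([x,y]|z)$ and the commutativity $[\Lambda_i,\Lambda_j]=0$.

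\emph{Step 1 (graded expansion).} Write $W=\exp(w)$ with $w=\sum_{k\ge1}w_k$, $w_k\in\mathfrak{g}_{-k}$. Then
\[
	W(\partial_i-\Lambda_i)W^{-1}=\partial_i-\sum_{k\ge0}\frac{1}{k!}(\mathrm{ad}\,w)^k\bigl(\Lambda_i\bigr)-\sum_{k\ge1}\frac{1}{k!}(\mathrm{ad}\,w)^{k-1}\bigl(\partial_i(w)\bigr).
\]
Here the last sum is the usual formula for $W\partial_i(W^{-1})$. I will sort this expression by degree using $\deg\Lambda_i=1$.
\begin{itemize}
\item In degree $1$ only $\Lambda_i$ appears, and it cancels against the $-\Lambda_i$ in \eqref{Eq:Sato}.
\item In degree $0$ only $[w_1,\Lambda_i]$ appears. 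Hence $U_i=[w_1,\Lambda_i]$.
\item The condition $U_i\in\mathfrak{g}_0$ forces every component of negative degree to vanish. In degree $-1$ this gives
\[
	\partial_i(w_1)+[w_2,\Lambda_i]+\tfrac12[w_1,[w_1,\Lambda_i]]=0 .
\]
Since $[w_1,\Lambda_i]=U_i$, this reads $\partial_i(w_1)=-[w_2,\Lambda_i]-\tfrac12[w_1,U_i]$.
\end{itemize}

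\emph{Step 2 (the pairing).} Since $\vartheta$ and $\Lambda_j$ are constant, $\partial_i(U_j)=[\partial_i(w_1),\Lambda_j]$. By invariance,
\[
	(\vartheta|[\partial_i(w_1),\Lambda_j])=([\vartheta,\partial_i(w_1)]|\Lambda_j)=-(\partial_i(w_1)|\Lambda_j),
\]
using $\partial_i(w_1)\in\mathfrak{g}_{-1}$. Substitute the degree $-1$ relation into this and treat the two terms separately.
\begin{itemize}
\item The first term is $([w_2,\Lambda_i]|\Lambda_j)=(w_2|[\Lambda_i,\Lambda_j])=0$.
\item The second term is $\tfrac12([w_1,U_i]|\Lambda_j)=-\tfrac12(U_i|[w_1,\Lambda_j])=-\tfrac12(U_i|U_j)$.
\end{itemize}
Multiplying by $2$ gives exactly \eqref{Eq:DS2}.

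\emph{Main obstacle.} The main care is needed in Step 1. One must get the signs and the coefficient $\tfrac12$ right in the degree $-1$ part of $W(\partial_i-\Lambda_i)W^{-1}$. It is important that $w_2$ enters only linearly, through $[w_2,\Lambda_i]$, since the $w_2$-term must drop out via $[\Lambda_i,\Lambda_j]=0$. One should also check that the central element $K$ causes no trouble. In the loop realization, brackets may produce multiples of $K$, but all manipulations above use only the abstract invariance of $(\cdot|\cdot)$ and the grading by $\vartheta$, and both hold in $\mathfrak{g}$ itself. So no separate treatment of $K$ is needed.
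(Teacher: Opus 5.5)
Your proof is correct and follows the paper's argument. Both extract $U_i=[w_1,\Lambda_i]$ and the degree $(-1)$ relation $\partial_i(w_1)+[w_2,\Lambda_i]+\frac12[w_1,[w_1,\Lambda_i]]=0$ from the Sato equations, then pair with $\Lambda_j$ using invariance, $[\Lambda_i,\Lambda_j]=0$ and the $\vartheta$-grading; the only cosmetic difference is that you start from $(\vartheta|\partial_i(U_j))$ and use $\partial_i(w_1)\in\mathfrak{g}_{-1}$, whereas the paper pairs $\Lambda_j$ with the degree $(-1)$ relation and writes $\Lambda_j=[\vartheta,\Lambda_j]$.
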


\begin{proof}
System \eqref{Eq:Sato} can be described explicitly as
\[
	U_i = \mathrm{ad}\,w_1(\Lambda_i)
\]
in degree $0$ and
\begin{equation}\label{Eq:Lem_DS2_Proof}
	\mathrm{ad}\,w_2(\Lambda_i) + \frac12(\mathrm{ad}\,w_1)^2(\Lambda_i) + \partial_i(w_1) = 0
\end{equation}
in degree $(-1)$.
It follows that
\begin{align*}
	(\Lambda_j|\text{LHS of \eqref{Eq:Lem_DS2_Proof}})
	&= (\Lambda_j|[w_2,\Lambda_i]) + \frac12(\Lambda_j|[w_1,[w_1,\Lambda_i]]) + ([\vartheta,\Lambda_j]|\partial_i(w_1)) \\
	&= ([\Lambda_i,\Lambda_j]|w_2) - \frac12([w_1,\Lambda_j]|[w_1,\Lambda_i]) - (\vartheta|[\partial_i(w_1),\Lambda_j]) \\
	&= 0 - \frac12(U_j|U_i) - (\vartheta|\partial_i(U_j)),
\end{align*}
from which we obtain system \eqref{Eq:DS2}.
\end{proof}

We call systems \eqref{Eq:DS1} and \eqref{Eq:DS2} a DS hierarchy in this article.

\begin{rem}
In general, the DS hierarchy is formulated in terms of all generators of $\mathfrak{s}$.
In this article we use only degree $1$ generators $\Lambda_i$ to derive a Painlev\'e system of the lowest order.
\end{rem}

\subsection{Similarity reduction}

Let
\[
	K_i = \sum_{j=0}^{n}\alpha^{\vee}_{i+j(m+1)}\quad (i=1,\ldots,m).
\]
Then they satisfy
\[
	[K_i,K_j] = 0,\quad
	[K_i,\Lambda_j] = 0,\quad
	[\vartheta,K_i] = 0,\quad
	(\vartheta|K_i) = 0\quad (i,j=1,\ldots,m,\ k=1,\ldots,m+1)
\]
and
\begin{equation}\label{Eq:DS3}
	(K_i|U_j) = 0\quad (i=1,\ldots,m,\ j=1,\ldots,m+1).
\end{equation}

Introducing new constants $\rho_i\ (i=1,\ldots,m)$, we set
\[
	\rho = \sum_{i=1}^{m}\rho_iK_i.
\]
We also set
\[
	\widehat{U} = \rho + \sum_{i=1}^{m+1}t_iU_i,\quad
	\widehat{\Lambda} = \sum_{j=1}^{m+1}t_j\Lambda_j.	
\]
Assume that a Sato-like equation
\[
	\vartheta - \widehat{U} - \widehat{\Lambda} = W\left(\vartheta-\rho-\widehat{\Lambda}\right)W^{-1}
\]
holds.
Then we obtain Zakharov-Shabat equations
\begin{equation}\label{Eq:ZS_SR}
	[\partial_i-U_i-\Lambda_i,\vartheta-\widehat{U}-\widehat{\Lambda}] = 0\quad (i=1,\ldots,m+1),
\end{equation}
or equivalently,
\begin{equation}\begin{split}\label{Eq:DS_SR1}
	\partial_i(\widehat{U}) - [U_i,\widehat{U}] = 0,\quad
	[U_i,\widehat{\Lambda}] - [\widehat{U},\Lambda_i] = 0\quad (i=1,\ldots,m+1).
\end{split}\end{equation}
Note that system \eqref{Eq:ZS_SR} is derived via a gauge transformation
\[
	W[\partial_i-\Lambda_i,\vartheta-\rho-\widehat{\Lambda}]W^{-1} = 0.
\]

\begin{lem}
We have relations
\begin{equation}\label{Eq:DS_SR2}
	2(\vartheta|U_i) - (\widehat{U}|U_i) = 0\quad (i=1,\ldots,m+1).
\end{equation}
\end{lem}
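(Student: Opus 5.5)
The plan is to copy the proof of the previous lemma, using the Sato-like equation in place of the Sato equations. Expand
\[
	W\left(\vartheta-\rho-\widehat{\Lambda}\right)W^{-1} = \exp\Big(\mathrm{ad}\sum_k w_k\Big)\big(\vartheta-\rho-\widehat{\Lambda}\big)
\]
and compare homogeneous components with respect to the gradation. Here $\vartheta,\rho\in\mathfrak{g}_0$ and $\widehat{\Lambda}\in\mathfrak{g}_1$, and the $t_i$ are treated as scalars.

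In degree $0$ the only contributions are $\vartheta-\rho$ and $-[w_1,\widehat{\Lambda}]$, because $[w_k,\vartheta]$ and $[w_k,\rho]$ have degree $-k<0$. Comparing with $\vartheta-\widehat U$, this gives
\[
	\widehat{U} = \rho + \mathrm{ad}\,w_1(\widehat{\Lambda}).
\]
Combined with $U_i=\mathrm{ad}\,w_1(\Lambda_i)$ from the previous lemma, this is consistent with $\widehat U=\rho+\sum t_iU_i$.

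In degree $-1$ the left-hand side contributes nothing. Using $[w_1,\vartheta]=w_1$ (since $w_1\in\mathfrak{g}_{-1}$), the right-hand side gives
\begin{equation*}
	w_1 - [w_1,\rho] - [w_2,\widehat{\Lambda}] - \tfrac12[w_1,[w_1,\widehat{\Lambda}]] = 0 .
\end{equation*}

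Next, pair this degree $-1$ identity with $\Lambda_i$ using the invariant form, and evaluate each term.

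The first term: $(\Lambda_i|w_1)=([\vartheta,\Lambda_i]|w_1)=(\vartheta|[\Lambda_i,w_1])=-(\vartheta|U_i)$.

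The $\rho$ term: $(\Lambda_i|[w_1,\rho])=([\Lambda_i,w_1]|\rho)=-(U_i|\rho)$. This vanishes by \eqref{Eq:DS3}, since $\rho$ is a combination of the $K_j$.

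The $w_2$ term: $(\Lambda_i|[w_2,\widehat\Lambda])=([\Lambda_i,w_2]|\widehat\Lambda)=-(w_2|[\Lambda_i,\widehat\Lambda])=0$, because the $\Lambda_j$ commute.

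The quadratic term: $(\Lambda_i|[w_1,[w_1,\widehat\Lambda]])=-([w_1,\Lambda_i]|[w_1,\widehat\Lambda])=-(U_i|\widehat U-\rho)=-(U_i|\widehat U)$, again by \eqref{Eq:DS3}.

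Collecting these, the pairing becomes
\[
	-(\vartheta|U_i)+\tfrac12(U_i|\widehat U)=0,
\]
which is exactly \eqref{Eq:DS_SR2}.

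The only delicate point is getting the signs and the $\vartheta$ contribution right in the degree $-1$ expansion. The term $[w_1,\vartheta]$ appears at first order in $\mathrm{ad}\,w$ and plays the role that $\partial_i(w_1)$ played before. One must also check that no second-order term involving $\vartheta$ lands in degree $-1$; it does not, since $(\mathrm{ad}\,w_1)^2\vartheta$ has degree $-2$. Finally, \eqref{Eq:DS3} is used twice to eliminate $\rho$.
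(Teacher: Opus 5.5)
Your proof is correct, but it takes a different route from the paper's. You rerun the argument behind \eqref{Eq:DS2} on the Sato-like equation. Its degree $-1$ component $w_1-[w_1,\rho]-[w_2,\widehat{\Lambda}]-\frac12[w_1,[w_1,\widehat{\Lambda}]]=0$, paired with $\Lambda_i$, gives the identity directly, with $[w_1,\vartheta]=w_1$ playing the role that $\partial_i(w_1)$ played before. Your signs and degree bookkeeping all check out.

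The paper never returns to $W$ for this lemma. It takes the already-proved relation \eqref{Eq:DS2}, contracts it with $t_j$, and uses the first equation of \eqref{Eq:DS_SR1} together with \eqref{Eq:DS1} to replace $\sum_j t_j(\vartheta|\partial_i(U_j))$ by $(\vartheta|-U_i-[\rho,U_i])$. It then removes $(\vartheta|[\rho,U_i])$ via $[\vartheta,\rho]=0$ and $(\rho|U_i)$ via \eqref{Eq:DS3}.

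\textbf{What the paper's route buys.} It shows that \eqref{Eq:DS_SR2} follows from the hierarchy relation \eqref{Eq:DS2} plus the Euler-type homogeneity $U_i+\sum_j t_j\partial_j(U_i)+[\rho,U_i]=0$ encoded in the reduced Zakharov--Shabat equations. So it is a statement purely about the functions $U_i$ and their flows.

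\textbf{What your route buys.} It is purely algebraic, involves no $t$-derivatives, and uses neither \eqref{Eq:DS1} nor \eqref{Eq:DS2}. You only need:
\begin{itemize}
\item the degree $0$ part of the Sato equations, to identify $U_i=\mathrm{ad}\,w_1(\Lambda_i)$;
\item the degree $0$ and $-1$ parts of the Sato-like equation, with the same dressing $W$, which the paper also implicitly assumes;
\item \eqref{Eq:DS3}, which both proofs use to eliminate $\rho$.
\end{itemize}
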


\begin{proof}
We can rewrite the first equation of system \eqref{Eq:DS_SR1} as
\begin{align*}
	\partial_i(\widehat{U}) - [U_i,\widehat{U}] &= \partial_i\left(\rho+\sum_{j=1}^{m+1}t_jU_j\right) - \left[U_i,\rho+\sum_{j=1}^{m+1}t_jU_j\right] \\
	&= U_i + \sum_{j=1}^{m+1}t_j\partial_i(U_j) + [\rho,U_i] + \sum_{j=1}^{m+1}t_j[U_j,U_i] \\
	&= U_i + \sum_{j=1}^{m+1}t_j\partial_j(U_i) + [\rho,U_i] \\
	&= 0.
\end{align*}
by using system \eqref{Eq:DS1}.
Recall that relation \eqref{Eq:DS3} implies $(\rho|U_i)=0$.
Then relation \eqref{Eq:DS2} implies
\begin{align*}
	2\sum_{j=1}^{m+1}t_j(\vartheta|\partial_i(U_j)) + \sum_{j=1}^{m+1}t_j(U_i|U_j) &= 2(\vartheta|-U_i-[\rho,U_i]) + (U_i|\widehat{U}-\rho) \\
	&= -2(\vartheta|U_i) - 2([\vartheta,\rho]|U_i) + (\widehat{U}|U_i) - (\rho|U_i) \\
	&= -2(\vartheta|U_i) + (\widehat{U}|U_i) \\
	&= 0.
\end{align*}
Hence we have proved the lemma.
\end{proof}

We call systems \eqref{Eq:DS_SR1} and \eqref{Eq:DS_SR2} a similarity reduction of the DS hierarchy.
Since a relation
\[
	\sum_{i=1}^{m+1}t_i\partial_i(\widehat{U}) + [\rho,\widehat{U}] = 0
\]
holds, the similarity reduction is regarded as a system of partial differential equations in $m$ variables.
Note that relations \eqref{Eq:DS_SR2} determine the components of the canonical central element $K$ in the functions $U_i$.

\subsection{Hamiltonian system}

In this subsection we regard the subalgebra $\mathfrak{g}_0$ of $\mathfrak{g}$ as that of $\mathfrak{gl}_{N+1}$.
It is a $(m+1)^2(n+1)$ dimensional vector space whose basis is formed by
\[
	E^{j}_{k,l} = E_{k+j(m+1),l+j(m+1)}\quad (j=0,\ldots,n,\ k,l=1,\ldots,m+1).
\]
We denote this basis by $\mathfrak{e}_0$.

\begin{lem}\label{Lem:DS_SR_Uhat_Ui}
The $\mathfrak{g}_0$-valued functions $U_i\ (i=1,\ldots,m+1)$ are described in terms of that $\widehat{U}$ as
\begin{align*}
	&(E^{j}_{i,k}|\widehat{U}) = t_i(E^{j}_{i,k}|U_i) - t_k(E^{j+1}_{i,k}|U_i), \\
	&(E^{j}_{k,i}|\widehat{U}) = t_i(E^{j}_{k,i}|U_i) - t_k(E^{j-1}_{k,i}|U_i)\quad (j\in\mathbb{Z}/(n+1)\mathbb{Z},\ k=1,\ldots,i-1), \\
	&(E^{j}_{k+1,i}|\widehat{U}) = t_i(E^{j}_{k+1,i}|U_i) - t_k(E^{j-1}_{k+1,i}|U_i), \\
	&(E^{j}_{i,k+1}|\widehat{U}) = t_i(E^{j}_{i,k+1}|U_i) - t_k(E^{j+1}_{i,k+1}|U_i)\quad (j\in\mathbb{Z}/(n+1)\mathbb{Z},\ k=i,\ldots,m)
\end{align*}
and
\[
	(E^{j}_{i,i}|U_i) = \frac{1}{t_i}(E^{j}_{i,i}|\widehat{U}-\rho),\quad
	(E^{j}_{k,k}|U_i) = 0\quad (j=0,\ldots,n,\ k=1,\ldots,m+1,\ k\neq i).
\]
\end{lem}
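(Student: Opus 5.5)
The plan is to read off everything from the commutator relations already established: the second equation of \eqref{Eq:DS_SR1}, $[U_i,\widehat{\Lambda}]=[\widehat{U},\Lambda_i]$, the second equation of \eqref{Eq:DS1}, $[U_i,\Lambda_k]=[U_k,\Lambda_i]$, and the definition $\widehat{U}=\rho+\sum_k t_kU_k$. All of these will be written out in the matrix realization of Section \ref{Sec:AffLie} and compared entry by entry.

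\textbf{Matrix form of $\Lambda_i$ and commutators.} First compute $\Lambda_i$ explicitly. The nested bracket $\mathrm{ad}\,e_{i+j(m+1)}\cdots(e_{i+m+j(m+1)})$ is, up to sign, the elementary matrix $E_{i+j(m+1),\,i+(j+1)(m+1)}$. When the index wraps around (the term passing through $e_0$), it carries a factor $z$. Hence $\Lambda_i$ is the cyclic block shift $\sum_j E_{i+j(m+1),\,i+(j+1)(m+1)}$ (with $z$ on the last term). It acts only in the ``colour'' $i$ and raises the block index $j$ by one.

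Consequently, for $X\in\mathfrak{g}_0$, the degree-one element $[X,\Lambda_i]$ is determined as follows:
\begin{itemize}
\item its column-$i$ entries come from $X\Lambda_i$, which takes the entries $(E^{j}_{k,i}|X)$ and shifts $j$;
\item its row-$i$ entries come from $-\Lambda_iX$, which takes the entries $(E^{j}_{i,k}|X)$ and shifts $j$ the other way.
\end{itemize}
To extract coefficients, I will pair the relations against the degree $(-1)$ elements dual to $E_{a,b}$ (with $z^{-1}$ where needed), using the invariant form $(z^kX|z^lY)=\delta_{k+l,0}\mathrm{tr}(XY)$.

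\textbf{Off-diagonal formulas.} Expand $[U_i,\widehat{\Lambda}]=\sum_k t_k[U_i,\Lambda_k]$ and compare with $[\widehat{U},\Lambda_i]$. Look at an entry in row $i$ and column $k\neq i$ (and symmetrically row $k$, column $i$).
\begin{itemize}
\item On the right-hand side only the $X\Lambda_i$ or $\Lambda_iX$ part contributes, giving one shifted entry of $\widehat{U}$.
\item On the left-hand side exactly two terms survive: $t_i$ times an entry of $U_i$ coming from $\Lambda_i$, and $t_k$ times a shifted entry of $U_i$ coming from $\Lambda_k$.
\end{itemize}
Relabelling $j$ gives formulas of the stated form, with index shifts $j\pm1$. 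The distinction between the ranges $k<i$ and $k\ge i$ (written $k+1$ in the statement) is only bookkeeping: it lets $k$ run over the colours other than $i$. The direction of each shift is decided by whether the wrap-around through $e_0$ occurs, which is where the cases $k<i$ and $k>i$ differ.

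\textbf{Diagonal formulas.} Take the $(k,k)$-type entries of $[U_i,\Lambda_k]=[U_k,\Lambda_i]$ with $k\neq i$. The right-hand side has no component in colour $k$ along $\Lambda_k$'s position. The left-hand side there gives $(E^{j+1}_{k,k}|U_i)-(E^{j}_{k,k}|U_i)$ up to shift, so the diagonal entries of $U_i$ in colour $k$ are independent of $j$.

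These diagonal entries are then forced to vanish by two facts:
\begin{itemize}
\item the trace condition \eqref{Eq:DS3}, $(K_k|U_i)=0$, which kills the $j$-independent value in colours $k=1,\ldots,m$;
\item the fact that $U_i\in\mathfrak{sl}$, which handles the remaining colour $m+1$.
\end{itemize}
Finally, the diagonal part of $\widehat{U}=\rho+\sum_k t_kU_k$ then reduces in colour $i$ to $\rho+t_iU_i$, which gives $(E^{j}_{i,i}|U_i)=\frac{1}{t_i}(E^{j}_{i,i}|\widehat{U}-\rho)$.

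\textbf{Main obstacle.} The only real difficulty is the index and sign bookkeeping. This includes the $z$-factors at the wrap-around term, the exact direction of the $j$-shift, and the careful use of $K$-components in the diagonal step (for which \eqref{Eq:DS_SR2} may be needed to handle the central part). I would fix conventions by first checking the case $m=1,n=1$ by hand, and then carry out the general computation in the same conventions.
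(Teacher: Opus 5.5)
Your derivation of the off-diagonal relations is the paper's. Both compare coefficients in $[U_i,\widehat{\Lambda}]=[\widehat{U},\Lambda_i]$ (second equation of \eqref{Eq:DS_SR1}). Both read off the colour-$i$ diagonal formula from $\widehat{U}-\rho=\sum_k t_kU_k$.

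The one step you argue differently is $(E^{j}_{k,k}|U_i)=0$ for $k\neq i$. The paper never proves this separately. It just expands $U_i$ as supported on the rows and columns of colour $i$, which follows at once from the degree-$0$ Sato equation $U_i=\mathrm{ad}\,w_1(\Lambda_i)=w_1\Lambda_i-\Lambda_iw_1$. Your route is valid: $j$-independence from $[U_i,\Lambda_k]=[U_k,\Lambda_i]$, then \eqref{Eq:DS3} and tracelessness. It stays with the Zakharov--Shabat relations plus \eqref{Eq:DS3}, but it is longer and gives less. The paper's observation also kills the entries of $U_i$ with neither row nor column of colour $i$. The lemma does not state this, but the proof of the next theorem uses it when it expands $(U_i|\widehat{U})$ over the cross only.

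Some corrections to your write-up:

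\begin{enumerate}
\item \eqref{Eq:DS3} does not dispose of colours $1,\ldots,m$ one at a time. With $T_k=\sum_j(E^{j}_{k,k}|U_i)$ it says $T_k=T_{k+1}$. Only together with $\sum_kT_k=0$ do all $T_k$ vanish. Then $T_k=(n+1)(E^{j}_{k,k}|U_i)$ for $k\neq i$ finishes the argument.

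\item Your remark that the shift direction differs between $k<i$ and $k>i$ is wrong. For every $k\neq i$, the entry in row colour $k$, column colour $i$ gives $(E^{j}_{i,k}|\widehat{U})=t_i(E^{j}_{i,k}|U_i)-t_k(E^{j+1}_{i,k}|U_i)$. The entry in row colour $i$, column colour $k$ gives the $j-1$ formula. The wrap-around block carries the same factor $z$ on both sides, which is why $j\in\mathbb{Z}/(n+1)\mathbb{Z}$.

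\item The split $k<i$, $k\ge i$ in the statement only enumerates the colours $\neq i$ as $k=1,\ldots,m$. So in the third and fourth lines your computation will give $t_{k+1}$, not $t_k$. The printed $t_k$ is a misprint, as the paper's own expansion of $[U_i,\widehat{\Lambda}]$ confirms.

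\item You do not need \eqref{Eq:DS_SR2}, since the pairings $(E^{j}_{k,l}|\cdot)$ do not see the $K$-component.
\end{enumerate}
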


\begin{proof}
We temporary use notations
\begin{align*}
	&\widetilde{E}^{j}_{k,l} = E_{k+j(m+1),l+(j+1)(m+1)}\quad (j=1,\ldots,n-1,\ k,l=1,\ldots,m+1), \\
	&\widetilde{E}^{n}_{k,l} = zE_{k+j(m+1),l}\quad (k,l=1,\ldots,m+1).
\end{align*}
The functions $U_i$ and the generators $\Lambda_i$ are described in terms of $E^{j}_{k,l}$ and $\widetilde{E}^{j}_{k,l}$ as
\begin{align*}
	U_i &= \sum_{j=0}^{n}(E^{j}_{i,i}|U_i)E^{j}_{i,i} + \sum_{j=0}^{n}\sum_{k=1}^{i-1}(E^{j}_{k,i}|U_i)E^{j}_{i,k} + \sum_{j=0}^{n}\sum_{k=i+1}^{m+1}(E^{j}_{k,i}|U_i)E^{j}_{i,k} \\
	&\quad + \sum_{j=0}^{n}\sum_{k=1}^{i-1}(E^{j}_{i,k}|U_i)E^{j}_{k,i} + \sum_{j=0}^{n}\sum_{k=i+1}^{m+1}(E^{j}_{i,k}|U_i)E^{j}_{k,i}
\end{align*}
and
\[
	\Lambda_i = \sum_{j=0}^{n}\widetilde{E}^{j}_{i,i}
\]
respectively.
Then we have
\begin{align*}
	[U_i,\widehat{\Lambda}] &= \sum_{j=0}^{n-1}\sum_{k=1}^{i-1}t_i(E^{j}_{i,i}-E^{j+1}_{i,i}|U_i)\widetilde{E}^{j}_{i,i} + \sum_{k=1}^{i-1}t_i(E^{n}_{i,i}-E^{0}_{i,i}|U_i)\widetilde{E}^{n}_{i,i} \\
	&= \sum_{j=0}^{n-1}\sum_{k=1}^{i-1}\left(t_k(E^{j}_{k,i}|U_i)-t_i(E^{j+1}_{k,i}|U_i)\right)\widetilde{E}^{j}_{i,k} + \sum_{k=1}^{i-1}\left(t_k(E^{n}_{k,i}|U_i)-t_i(E^{0}_{k,i}|U_i)\right)\widetilde{E}^{n}_{i,k} \\
	&\quad + \sum_{j=0}^{n-1}\sum_{k=i+1}^{m+1}\left(t_k(E^{j}_{k,i}|U_i)-t_i(E^{j+1}_{k,i}|U_i)\right)\widetilde{E}^{j}_{i,k} + \sum_{k=i+1}^{m+1}\left(t_k(E^{n}_{k,i}|U_i)-t_i(E^{0}_{k,i}|U_i)\right)\widetilde{E}^{n}_{i,k} \\
	&\quad + \sum_{j=0}^{n-1}\sum_{k=1}^{i-1}\left(t_i(E^{j}_{i,k}|U_i)-t_k(E^{j+1}_{i,k}|U_i)\right)\widetilde{E}^{j}_{k,i} + \sum_{k=1}^{i-1}\left(t_i(E^{n}_{i,k}|U_i)-t_k(E^{0}_{i,k}|U_i)\right)\widetilde{E}^{n}_{k,i} \\
	&\quad + \sum_{j=1}^{n}\sum_{k=i+1}^{m+1}\left(t_i(E^{j}_{i,k}|U_i)-t_k(E^{j+1}_{i,k}|U_i)\right)\widetilde{E}^{j}_{k,i} + \sum_{k=i+1}^{m+1}\left(t_i(E^{n}_{i,k}|U_i)-t_k(E^{0}_{i,k}|U_i)\right)\widetilde{E}^{n}_{k,i}
\end{align*}
and
\begin{align*}
	[\widehat{U},\Lambda_i] &= \sum_{j=0}^{n}\sum_{k=1}^{m+1}(E^{j}_{i,k}|\widehat{U})\widetilde{E}^{j}_{k,i} - \sum_{j=0}^{n-1}\sum_{k=1}^{m+1}(E^{j+1}_{k,i}|\widehat{U})\widetilde{E}^{j}_{i,k} - \sum_{k=1}^{m+1}(E^{0}_{k,i}|\widehat{U})\widetilde{E}^{n}_{i,k}.
\end{align*}
Hence we can prove the first half of the lemma thanks to the second equation of system \eqref{Eq:DS_SR1}.
Moreover, the relation
\[
	\sum_{i=1}^{m+1}t_iU_i = \widehat{U} - \rho
\]
proves the latter half.
\end{proof}

We define a Lie Poisson bracket for the $\mathfrak{g}_0$-valued function $\widehat{U}$ by
\begin{equation}\label{Eq:Poisson}
	\{(x|\widehat{U}),(y|\widehat{U})\} = (n+1)([y,x]|\widehat{U})\quad (x,y\in\mathfrak{e}_0).
\end{equation}
We also assume that
\[
	\{\rho_i,(x|\widehat{U})\} = 0\quad (i=1,\ldots,m,\ x\in\mathfrak{e}_0).
\]

\begin{thm}
System \eqref{Eq:DS_SR1} is expressed as a Hamiltonian system
\begin{equation}\label{Eq:DS_SR_Ham}
	\partial_i((x|\widehat{U})) = \{H_i,(x|\widehat{U})\},\quad
	H_i = \frac{(U_i|\widehat{U})}{2(n+1)}\quad (i=1,\ldots,m+1,\ x\in\mathfrak{e}_0).
\end{equation}
\end{thm}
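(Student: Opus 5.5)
We need $\partial_i(\widehat U)=[U_i,\widehat U]$, which is the first equation of system \eqref{Eq:DS_SR1}, to equal $\{H_i,\widehat U\}$ in components. The strategy is to compute the bracket $\{H_i,(x|\widehat{U})\}$ by the Leibniz rule, using Lemma \ref{Lem:DS_SR_Uhat_Ui} to regard $H_i$ as a function of $\widehat U$ (and the Casimir-like constants $\rho_j$, $t_j$).

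\textbf{Reduction to a gradient condition.} For any function $F$ of $\widehat U$, let $\nabla F\in\mathfrak{g}_0$ be its gradient with respect to the trace form, so that $dF=(\nabla F|d\widehat U)$. From \eqref{Eq:Poisson} we get
\[
	\{F,(x|\widehat U)\}=(n+1)([x,\nabla F]|\widehat U)=(n+1)(x|[\nabla F,\widehat U]).
\]
Hence it suffices to show that
\[
	[\nabla H_i,\widehat U]=\frac{1}{n+1}[U_i,\widehat U].
\]
Equivalently, we need $\nabla\bigl((U_i|\widehat U)/2\bigr)\equiv U_i$ modulo the centralizer of $\widehat U$.

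\textbf{Computing the gradient.} By Lemma \ref{Lem:DS_SR_Uhat_Ui}, $U_i$ depends linearly (affinely, through $\rho$) on $\widehat U$. Write $U_i=L_i(\widehat U)-\frac{1}{t_i}\pi_i(\rho)$, where $L_i$ is linear and $\pi_i$ is the projection onto the diagonal entries $E^j_{i,i}$. Then
\[
	\nabla(U_i|\widehat U)=U_i+L_i^{*}(\widehat U),
\]
where $L_i^*$ is the transpose of $L_i$ with respect to the trace form. The task is therefore to show
\[
	L_i^*(\widehat U)\equiv U_i \pmod{\text{centralizer of }\widehat U}.
\]
Equivalently, we want $[L_i^*(\widehat U)-U_i,\widehat U]=0$, or that the difference is a combination of $\widehat U$, $\rho$-terms and the $K_j$'s, which commute with everything in $\mathfrak g_0$ up to the relations \eqref{Eq:DS3}. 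The map $L_i$ is given block-entrywise by inverting the bidiagonal relations of Lemma \ref{Lem:DS_SR_Uhat_Ui}. These relations are circulant in $j\in\mathbb{Z}/(n+1)\mathbb{Z}$, with entries paired as $(i,k)$ versus $(k,i)$ and index shifts $j\pm1$. Transposition exchanges the $E^j_{i,k}$ and $E^j_{k,i}$ slots and reverses the shift, so one expects $L_i^*$ to coincide with $L_i$ on off-diagonal entries, i.e. $L_i$ is self-adjoint. I would verify this by checking that the defining relations for row $i$ and for column $i$ are mutually transposed. On the diagonal part, $L_i$ is $\frac1{t_i}\pi_i$, which is visibly self-adjoint. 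Given self-adjointness,
\[
	\nabla(U_i|\widehat U)=2L_i(\widehat U)=2U_i+\frac{2}{t_i}\pi_i(\rho).
\]
The extra term $\pi_i(\rho)$ lies in the diagonal Cartan part.

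\textbf{Main obstacle.} The hard step is handling the extra term $\frac{1}{t_i}\pi_i(\rho)$, which does not obviously commute with $\widehat U$. I would try to show that its contribution $[\pi_i(\rho),\widehat U]$ is absorbed into the equation, using $\rho=\sum\rho_jK_j$ and the relation $\sum_i t_i\partial_i\widehat U+[\rho,\widehat U]=0$. Failing that, I would check whether the normalization of $H_i$ already compensates for it: the diagonal term in $H_i$ involves $\sum_j(E^j_{i,i}|\widehat U)(E^j_{i,i}|\widehat U-\rho)/t_i$, whose gradient is $\frac{1}{t_i}\bigl(2\,\mathrm{diag}_i\widehat U-\pi_i\rho\bigr)$, so the $\rho$ terms partially cancel. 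I expect the careful bookkeeping of these diagonal and $\rho$ contributions, together with the self-adjointness check on the off-diagonal circulant blocks, to constitute the bulk of the proof. The factor $n+1$ in the bracket and in $H_i$ arises from the trace over the $n+1$ diagonal blocks and cancels in the final comparison with \eqref{Eq:DS_SR1}.
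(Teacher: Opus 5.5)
\textbf{Verdict: genuine gap, at exactly the step you flag.} Up to that step your route is the paper's. By the Leibniz rule the claim is equivalent to
$\sum_{y}(\overline{y}|U_i)\{(y|\widehat U),(x|\widehat U)\}=\sum_{y}(y|\widehat U)\{(\overline{y}|U_i),(x|\widehat U)\}$, that is, to $[L_i^*\widehat U-U_i,\widehat U]=0$. The off-diagonal part is exactly the self-adjointness you predict; the paper gets it from Lemma~\ref{Lem:DS_SR_Uhat_Ui} by a cyclic shift $j\mapsto j\pm1$.

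You leave the diagonal $\rho$-term open, and neither of your remedies can work. The relation $\sum_k t_k\partial_k\widehat U+[\rho,\widehat U]=0$ constrains only the sum of the flows, not $\partial_i$ alone. The $K_j$ are diagonal matrices, not central in $\mathfrak{g}_0$, so you cannot work modulo them. Your bookkeeping is also inconsistent: $\nabla(U_i|\widehat U)=U_i+L_i\widehat U=2U_i+\frac1{t_i}\pi_i(\rho)$, not $\frac{2}{t_i}$. Your second computation is the correct one, and it gives
\[
\{H_i,(x|\widehat U)\}=(x|[U_i,\widehat U])+\frac{1}{2t_i}(x|[\pi_i(\rho),\widehat U]),\qquad \pi_i(\rho)=(\rho_i-\rho_{i-1})\sum_{j=0}^{n}E^{j}_{i,i},\quad \rho_0=\rho_{m+1}=0.
\]

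The paper closes this step by writing the leftover as $\{(\rho|U_i),(x|\widehat U)\}$ and declaring it zero, because $(\rho|U_i)=0$ by \eqref{Eq:DS3}. That is the idea your proposal lacks relative to the paper. However, it is not valid as a Lie--Poisson identity. The function $(\rho|U_i)=\frac1{t_i}(\pi_i(\rho)|\widehat U-\rho)$ vanishes on solutions, but it is a non-Casimir linear function. Its bracket with $(x|\widehat U)$ is $\frac{n+1}{t_i}(x|[\pi_i(\rho),\widehat U])$, which is nonzero in general. For example, take $m=n=1$ and $u_{12}=(E^{0}_{2,1}|\widehat U)$. A direct computation gives $\partial_1u_{12}=\frac{(u_{11}-\rho_1)u_{12}}{t_1}+b(u_{22}-u_{11})$, where $b$ is the $(1,2)$-entry of $U_1$. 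On the other hand, $\{H_1,u_{12}\}=\frac{(2u_{11}-\rho_1)u_{12}}{2t_1}+b(u_{22}-u_{11})$. The discrepancy is $\frac{\rho_1}{2t_1}u_{12}$.

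So your obstacle is real and no bookkeeping removes it. The clean fix is to take
$\widetilde H_i=\frac{(U_i|\widehat U-\rho)}{2(n+1)}=H_i-\frac{(\rho|U_i)}{2(n+1)}$, which coincides with $H_i$ on solutions by \eqref{Eq:DS3}. We have $U_i=L_i(\widehat U-\rho)$, and $L_i$ is self-adjoint: off the diagonal by your argument, and equal to $\frac1{t_i}\pi_i$ on it. Hence $\widetilde H_i$ is a quadratic form in $\widehat U-\rho$ whose gradient is exactly $U_i/(n+1)$. Then $\{\widetilde H_i,(x|\widehat U)\}=(x|[U_i,\widehat U])$ follows immediately. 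With $H_i$ itself, the identity holds only up to the term $\frac{1}{2t_i}[\pi_i(\rho),\widehat U]$, which is infinitesimal conjugation by the diagonal torus.
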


\begin{proof}
We temporary use a notation
\[
	\overline{E^{j}_{k,l}} = E^{j}_{l,k}\quad (j=1,\ldots,n,\ k,l=1,\ldots,m+1).
\]
The left-hand side of system \eqref{Eq:DS_SR_Ham} is rewritten as
\begin{align*}
	\partial_i((x|\widehat{U})) &= (x|\partial_i(\widehat{U})) \\
	&= (x|[U_i,\widehat{U}]) \\
	&= \sum_{y\in\mathfrak{e}_0}(x|(\overline{y}|U_i)[y,\widehat{U}]) \\
	&= \sum_{y\in\mathfrak{e}_0}(\overline{y}|U_i)([x,y]|\widehat{U}) \\
	&= \frac{1}{n+1}\sum_{y\in\mathfrak{e}_0}(\overline{y}|U_i)\{(y|\widehat{U}),(x|\widehat{U})\}.
\end{align*}
The right-hand side of system \eqref{Eq:DS_SR_Ham} is rewritten as
\begin{align*}
	\{H_i,(x|\widehat{U})\} &= \frac{1}{2(n+1)}\{(U_i|\widehat{U}),(x|\widehat{U})\} \\
	&= \frac{1}{2(n+1)}\sum_{y\in\mathfrak{e}_0}\{(\overline{y}|U_i)(y|\widehat{U}),(x|\widehat{U})\} \\
	&= \frac{1}{2(n+1)}\sum_{y\in\mathfrak{e}_0}\left((\overline{y}|U_i)\{(y|\widehat{U}),(x|\widehat{U})\}+(y|\widehat{U})\{(\overline{y}|U_i),(x|\widehat{U})\}\right).
\end{align*}
Hence it is enough to verify a relation
\begin{equation}\label{Eq:DS_SR_Ham_Proof_1}
	\sum_{y\in\mathfrak{e}_0}(\overline{y}|U_i)\{(y|\widehat{U}),(x|\widehat{U})\} = \sum_{y\in\mathfrak{e}_0}(y|\widehat{U})\{(\overline{y}|U_i),(x|\widehat{U})\}\quad (i=1,\ldots,m+1,\ x\in\mathfrak{e}_0).
\end{equation}
The left-hand side of relation \eqref{Eq:DS_SR_Ham_Proof_1} is described in terms of the elements of $\mathfrak{e}_0$ as
\begin{equation}\label{Eq:DS_SR_Ham_Proof_2}\begin{split}
	\sum_{y\in\mathfrak{e}_0}(\overline{y}|U_i)\{(y|\widehat{U}),(x|\widehat{U})\} &= \sum_{j=0}^{n}\sum_{k=1}^{i-1}(E^{j}_{i,k}|U_i)\{(E^{j}_{k,i}|\widehat{U}),(x|\widehat{U})\} + \sum_{j=0}^{n}\sum_{k=i}^{m}(E^{j}_{k+1,i}|U_i)\{(E^{j}_{i,k+1}|\widehat{U}),(x|\widehat{U})\} \\
	&\quad + \sum_{j=0}^{n}\sum_{k=1}^{i-1}(E^{j}_{k,i}|U_i)\{(E^{j}_{i,k}|\widehat{U}),(x|\widehat{U})\} + \sum_{j=0}^{n}\sum_{k=i}^{m}(E^{j}_{i,k+1}|U_i)\{(E^{j}_{k+1,i}|\widehat{U}),(x|\widehat{U})\} \\
	&\quad + \sum_{j=0}^{n}(E^{j}_{i,i}|U_i)\{(E^{j}_{i,i}|\widehat{U}),(x|\widehat{U})\}.
\end{split}\end{equation}
We can rewrite the first term of the right-hand side of relation \eqref{Eq:DS_SR_Ham_Proof_2} as
\begin{align*}
	&\sum_{j=0}^{n}\sum_{k=1}^{i-1}(E^{j}_{i,k}|U_i)\{(E^{j}_{k,i}|\widehat{U}),(x|\widehat{U})\} \\
	&= \sum_{j=0}^{n}\sum_{k=1}^{i-1}(E^{j}_{i,k}|U_i)\left(t_i\{(E^{j}_{k,i}|U_i),(x|\widehat{U})\}-t_k\{(E^{j-1}_{k,i}|U_i),(x|\widehat{U})\}\right) \\
	&= \sum_{j=0}^{n}\sum_{k=1}^{i-1}\left(t_i(E^{j}_{i,k}|U_i)-t_k(E^{j+1}_{i,k}|U_i)\right)\{(E^{j}_{k,i}|U_i),(x|\widehat{U})\} \\
	&= \sum_{j=0}^{n}\sum_{k=1}^{i-1}(E^{j}_{i,k}|\widehat{U})\{(E^{j}_{k,i}|U_i),(x|\widehat{U})\}
\end{align*}
by using Lemma \ref{Lem:DS_SR_Uhat_Ui}.
Here we assume that the index $j$ is congruent modulo $n+1$.
The second, third and fourth terms can be rewritten in a similar manner.
The fifth term is rewritten as
\begin{align*}
	&\sum_{j=0}^{n}(E^{j}_{i,i}|U_i)\{(E^{j}_{i,i}|\widehat{U}),(x|\widehat{U})\} \\
	&= \sum_{j=0}^{n}(E^{j}_{i,i}|U_i)\{(E^{j}_{i,i}|\widehat{U}-\rho),(x|\widehat{U})\} \\
	&= \sum_{j=0}^{n}(E^{j}_{i,i}|\widehat{U}-\rho)\{(E^{j}_{i,i}|U_i),(x|\widehat{U})\} \\
	&= \sum_{j=0}^{n}(E^{j}_{i,i}|\widehat{U})\{(E^{j}_{i,i}|U_i),(x|\widehat{U})\} - \{(\rho|U_i),(x|\widehat{U})\} \\
	&= \sum_{j=0}^{n}(E^{j}_{i,i}|\widehat{U})\{(E^{j}_{i,i}|U_i),(x|\widehat{U})\}.
\end{align*}
Hence we obtain relation \eqref{Eq:DS_SR_Ham_Proof_1}.
\end{proof}

\begin{rem}
The Poisson bracket for the DS hierarchy was proposed in \cite{BGHM}.
Its relationship with Poisson bracket \eqref{Eq:Poisson} has not been clarified yet.
\end{rem}

\section{Similarity reduction on the Borel subalgebra}\label{Sec:SR}

In this section we transform the similarity reduction to that on the Borel subalgebra $\mathfrak{b}_{+}$ by a gauge transformation.
We also express the similarity reduction as a Hamiltonian system in terms of canonical coordinates.

\subsection{Gauge transformation}

We consider a function 
\[
	W_0 = \exp(w_0),\quad
	w_0 \in \mathfrak{g}_0\cap\mathfrak{n}_{-}
\]
satisfying a relation
\begin{equation}\label{Eq:Gauge_Borel}
	\mathcal{M} = \vartheta - W_0(\vartheta-\widehat{U}-\widehat{\Lambda})W_0^{-1} \in \mathfrak{b}_{+}.
\end{equation}
We also set
\[
	\mathcal{B}_i = \partial_i - W_0(\partial_i-U_i-\Lambda_i)W_0^{-1}\quad (i=1,\ldots,m+1).
\]
Then the Zakharov-Shabat equations
\begin{equation}\label{Eq:DS_SR_Borel1}
	[\partial_i-\mathcal{B}_i,\vartheta-\mathcal{M}] = 0\quad (i=1\ldots,m+1)
\end{equation}
are derived via a gauge transformation
\[
	W_0[\partial_i-U_i-\Lambda_i,\vartheta-\widehat{U}-\widehat{\Lambda}]W_0^{-1} = 0.
\]
We denote the $\mathfrak{h}$-component of the function $\mathcal{M}$ by
\[
	\mathcal{M}|_{\mathfrak{h}} = \sum_{i=0}^{N}\kappa_i\alpha^{\vee}_i.
\]

\begin{prop}
Relations $\mathcal{B}_i\in\mathfrak{b}_{+}\ (i=1,\ldots,m+1)$ hold.
Moreover, we have equations
\[
	\partial_i(\kappa_i) = 0\quad (i=0,\ldots,N).
\]
\end{prop}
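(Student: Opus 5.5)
The plan is to decompose everything according to the $\mathbb{Z}$-gradation of type $\mathbf{s}$ and use $w_0\in\mathfrak{g}_0$. Since $[\vartheta,w_0]=0$, we have $W_0\vartheta W_0^{-1}=\vartheta$. Hence
\[
\mathcal{M}=\mathcal{M}_0+\mathcal{M}_1,\quad \mathcal{M}_0=W_0\widehat{U}W_0^{-1}\in\mathfrak{g}_0,\quad \mathcal{M}_1=W_0\widehat{\Lambda}W_0^{-1}\in\mathfrak{g}_1,
\]
and similarly $\mathcal{B}_i=\mathcal{B}_{i,0}+\mathcal{B}_{i,1}$ with
\[
\mathcal{B}_{i,0}=W_0U_iW_0^{-1}+\partial_i(W_0)W_0^{-1},\quad \mathcal{B}_{i,1}=W_0\Lambda_iW_0^{-1}.
\]
Every element of $\mathfrak{g}_k$ with $k>0$ is a combination of root vectors of positive roots, since such a root contains $\alpha_0$-type simple roots $\alpha_{j(m+1)}$ with positive total coefficient. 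Therefore $\mathcal{B}_{i,1}\in\mathfrak{n}_+$ automatically. The first claim thus reduces to showing that the $\mathfrak{g}_0\cap\mathfrak{n}_-$-part of $\mathcal{B}_{i,0}$ vanishes. The hypothesis \eqref{Eq:Gauge_Borel} says exactly that $\mathcal{M}_0\in\mathfrak{b}_+$.

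For this I would differentiate the defining condition. Using the first equation of \eqref{Eq:DS_SR1}, one gets $\partial_i(\mathcal{M}_0)=[\partial_i(W_0)W_0^{-1},\mathcal{M}_0]+W_0[U_i,\widehat{U}]W_0^{-1}=[\mathcal{B}_{i,0},\mathcal{M}_0]$. Likewise $\partial_i(\mathcal{M}_1)=[\mathcal{B}_{i,0},\mathcal{M}_1]+\mathcal{B}_{i,1}$, and conjugating the second equation of \eqref{Eq:DS_SR1} gives $[W_0U_iW_0^{-1},\mathcal{M}_1]=[\mathcal{M}_0,\mathcal{B}_{i,1}]$. These are just the degree $0$ and degree $1$ parts of \eqref{Eq:DS_SR1_Borel1}. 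Since $\mathcal{M}_0\in\mathfrak{b}_+$ for all $t$, the bracket $[\mathcal{B}_{i,0},\mathcal{M}_0]$ has no $\mathfrak{n}_-$-component.

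Now write $X=(\mathcal{B}_{i,0})_{\mathfrak{n}_-}$ and suppose $X\neq0$. Decompose $X$ by height of roots of the finite part and let $X_{-h}$ be its component of lowest height $-h<0$. In $[\mathcal{B}_{i,0},\mathcal{M}_0+\mathcal{M}_1]$, the component of height $-h+1$ coming from the root vectors $e_j$ in $\mathcal{M}$ is $[X_{-h},\sum_j c_je_j]$. The coefficients $c_j$ are those of $e_j$ in $\mathcal{M}_0$ for $j\notin(m+1)\mathbb{Z}$, and essentially $t_k$ and entries of $\widehat{U}$ in $\mathcal{M}_1$. Because the result lies in $\mathfrak{b}_+$ (respectively because of the degree $1$ relations above), this component must vanish. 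For generic nonzero $c_j$, the element $\sum_jc_je_j$ is a principal-type nilpotent, and its adjoint action is injective on $\mathfrak{n}_-$. This forces $X_{-h}=0$, a contradiction.

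I expect this injectivity, i.e.\ the nondegeneracy of the coefficients $c_j$ (in particular $t_k\neq0$ and the relevant entries of $\widehat{U}$ being nonzero generically), to be the main obstacle. I would first try to settle it by working in the matrix realization with the explicit block form of $\Lambda_i$ from the proof of Lemma \ref{Lem:DS_SR_Uhat_Ui}.

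Once $\mathcal{B}_i\in\mathfrak{b}_+$ is known, the second claim is easy. The degree $0$ part of \eqref{Eq:DS_SR_Borel1} reads $\partial_i(\mathcal{M}_0)=[\mathcal{B}_{i,0},\mathcal{M}_0]$. No central term can appear here, because neither $\mathcal{B}_i$ nor $\mathcal{M}$ has components of negative degree, and $[\mathcal{B}_i,\vartheta]$ has no degree $0$ part. Both $\mathcal{B}_{i,0}$ and $\mathcal{M}_0$ lie in $\mathfrak{b}_+\cap\mathfrak{g}_0$. The bracket of two such elements has zero $\mathfrak{h}$-component, since $[\mathfrak{h},\mathfrak{n}_+]\subset\mathfrak{n}_+$ and $[\mathfrak{n}_+,\mathfrak{n}_+]\subset\mathfrak{n}_+$. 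Hence $\partial_i(\mathcal{M}|_{\mathfrak{h}})=0$, which gives $\partial_i(\kappa_j)=0$ for all $j$, including the coefficient $\kappa_0$ that carries the $K$-component.
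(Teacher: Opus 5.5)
\textbf{There is a genuine gap in how you extract $X=0$.} Your reduction is sound and matches the paper. Since $W_0\vartheta W_0^{-1}=\vartheta$ and $\mathfrak{g}_1\subset\mathfrak{n}_+$, everything comes down to the $\mathfrak{n}_-$-part $X$ of $\mathcal{B}_{i,0}$. The degree-$0$ equation $\partial_i(\mathcal{M}_0)=[\mathcal{B}_{i,0},\mathcal{M}_0]$ with $\mathcal{M}_0\in\mathfrak{b}_+$ then shows that $[X,\mathcal{M}_0]$ has no $\mathfrak{n}_-$-component. Your derivation of $\partial_i(\kappa_j)=0$ from $\mathcal{B}_i\in\mathfrak{b}_+$ is also exactly the paper's.

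The problems are these:
\begin{itemize}
\item Write $\mathcal{M}_0=\mathcal{M}|_{\mathfrak{h}}+\sum_j c_je_j+(\text{higher})$. The $\mathfrak{n}_-$-component of $[X,\mathcal{M}_0]$ at height $-h+1$ is $[X_{-h},\sum_jc_je_j]+[X_{-h+1},\mathcal{M}|_{\mathfrak{h}}]$, not $[X_{-h},\sum_jc_je_j]$ alone. So its vanishing gives no information about $[X_{-h},\sum_jc_je_j]$ by itself.
\item When $h=1$ the relevant component lies in $\mathfrak{h}$ and equals $\partial_i(\mathcal{M}|_{\mathfrak{h}})$. You only know this vanishes after the first claim is proved, so that case is circular.
\item The appeal to the degree-$1$ relations gives nothing. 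The degree-$1$ part of \eqref{Eq:DS_SR_Borel1} is equivalent to the second equation of \eqref{Eq:DS_SR1} and holds identically. Since $\mathfrak{g}_1\subset\mathfrak{n}_+$, no component of it is forced to vanish. Hence the $t_k$ and the $e_{j(m+1)}$-coefficients play no role.
\end{itemize}

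\textbf{The missing idea is to use the lowest height, where only the Cartan part acts.} The height $-h$ component of the $\mathfrak{n}_-$-part of $[X,\mathcal{M}_0]$ is simply $[X_{-h},\mathcal{M}|_{\mathfrak{h}}]$. There is no lower-height piece of $X$ for the nilpotent part of $\mathcal{M}_0$ to combine with. This bracket multiplies the component of $X_{-h}$ along the root vector of $-\alpha$ by $\alpha(\mathcal{M}|_{\mathfrak{h}})$. So its vanishing kills $X_{-h}$ provided $\alpha(\mathcal{M}|_{\mathfrak{h}})\neq 0$ for the positive roots $\alpha$ of the diagonal blocks, and you then induct upward in height.

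This is what the paper does. It pairs the degree-$0$ equation with $e^{j}_{1,m}$, then with $e^{j}_{1,m-1}$ and $e^{j}_{2,m}$, and so on. It uses the explicitly stated genericity assumption $-\kappa_{j(m+1)}+\kappa_{1+j(m+1)}+\kappa_{m+j(m+1)}-\kappa_{(j+1)(m+1)}\neq 0$ and its analogues for the other roots.

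So the nondegeneracy actually needed is on the Cartan data $\kappa_i$: distinct diagonal entries in each block of $\mathcal{M}_0$. It is not on the nilpotent coefficients $c_j$ or on the $t_k$. The injectivity of $\mathrm{ad}$ of a principal nilpotent on $\mathfrak{n}_-$ of a block is true but is not the mechanism here.
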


\begin{proof}
We temporary use notations
\begin{align*}
	e^{j}_{k,l} &= \mathrm{ad}\,e_{k+j(m+1)}\,\mathrm{ad}\,e_{k+1+j(m+1)}\,\ldots\,\mathrm{ad}\,e_{l-1+j(m+1)}(e_{l+j(m+1)}), \\
	f^{j}_{l,k} &= \mathrm{ad}\,e_{l+j(m+1)}\,\mathrm{ad}\,e_{l-1+j(m+1)}\,\ldots\,\mathrm{ad}\,e_{k+1+j(m+1)}(e_{k+j(m+1)}), \\
	u^{j}_{l,k} &= (\partial_i(W_0)W_0^{-1}+W_0U_iW_0^{-1}|e^{j}_{k,l})\quad (j=0,\ldots,n,\ k,l=1,\ldots,m,\ k\leq l).
\end{align*}
The $\mathfrak{b}_+$-valued functions $\mathcal{M}$ and $\mathcal{B}_i$ are described as
\[
	\mathcal{M} = W_0\widehat{U}W_0^{-1} + W_0\widehat{\Lambda}W_0^{-1},\quad
	\mathcal{B}_i = \partial_i(W_0)W_0^{-1} + W_0U_iW_0^{-1} + W_0\Lambda_iW_0^{-1}.
\]
Hence the $\mathfrak{g}_0$-component of system \eqref{Eq:DS_SR_Borel1} is described as
\[
	\partial_i(W_0\widehat{U}W_0^{-1}) = [\partial_i(W_0)W_0^{-1}+W_0U_iW_0^{-1},W_0\widehat{U}W_0^{-1}].
\]
Then we have
\begin{align*}
	(\partial_i(W_0\widehat{U}W_0^{-1})|e^{j}_{1,m}) &= (\partial_i(W_0)W_0^{-1}+W_0U_iW_0^{-1}|[W_0\widehat{U}W_0^{-1},e^{j}_{1,m}]) \\
	&= (u^{j}_{m,1}f^{j}_{m,1}|(-\kappa_{j(m+1)}+\kappa_{1+j(m+1)}+\kappa_{m+j(m+1)}-\kappa_{(j+1)(m+1)})e^{j}_{1,m}) \\
	&= u^{j}_{m,1}(-\kappa_{j(m+1)}+\kappa_{1+j(m+1)}+\kappa_{m+j(m+1)}-\kappa_{(j+1)(m+1)}),
\end{align*}
where $\kappa_{N+1}=\kappa_0$.
Note that relation \eqref{Eq:Gauge_Borel} implies $(W_0\widehat{U}W_0^{-1}|e^{j}_{1,m})=0$.
On the other hand, we can generally assume that
\[
	-\kappa_{j(m+1)}+\kappa_{1+j(m+1)}+\kappa_{m+j(m+1)}-\kappa_{(j+1)(m+1)} \neq 0.
\]
It follows that $u^{j}_{m,1}=0$.
Furthermore we have
\begin{align*}
	(\partial_i(W_0\widehat{U}W_0^{-1})|e^{j}_{1,m-1}) &= u^{j}_{m-1,1}(-\kappa_{j(m+1)}+\kappa_{1+j(m+1)}+\kappa_{m-1+j(m+1)}-\kappa_{m+j(m+1)}), \\
	(\partial_i(W_0\widehat{U}W_0^{-1})|e^{j}_{2,m}) &= u^{j}_{m,2}(-\kappa_{1+j(m+1)}+\kappa_{2+j(m+1)}+\kappa_{m+j(m+1)}-\kappa_{(j+1)(m+1)}).
\end{align*}
It follows that $u^{j}_{m-1,1}=0$ and $u^{j}_{m,2}=0$.
In a similar manner, we obtain
\[
	u^{j}_{k,l} = 0\quad (j=0,\ldots,n,\ k,l=1,\ldots,m,\ k\leq l).
\]
Namely, we have proved the first half of the proposition.
Moreover, the relation
\[
	\partial_i(\mathcal{M}|_{\mathfrak{h}}) = [\mathcal{B}_i,\mathcal{M}]|_{\mathfrak{h}}\quad (i=1,\ldots,m+1)
\]
and the first half prove the latter half.
\end{proof}

\begin{lem}
We have relations
\begin{equation}\label{Eq:DS_SR_Borel2}
	2(\vartheta|\mathcal{B}_i) - (\mathcal{M}|\mathcal{B}_i-\partial_i(W_0)W_0^{-1}) = 0\quad (i=1,\ldots,m+1).
\end{equation}
\end{lem}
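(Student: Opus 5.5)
Write $\mathcal{A}_i=\partial_i(W_0)W_0^{-1}$, so that
\[
\mathcal{B}_i-\mathcal{A}_i=W_0(U_i+\Lambda_i)W_0^{-1},\qquad \mathcal{M}=W_0(\widehat U+\widehat\Lambda)W_0^{-1}.
\]
The plan is to reduce relation \eqref{Eq:DS_SR_Borel2} to relation \eqref{Eq:DS_SR2} using invariance of the form under conjugation by $W_0$. Each of the two terms gets rewritten separately, and the discrepancies should cancel.

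\emph{Second term.} By invariance,
\[
(\mathcal{M}|\mathcal{B}_i-\mathcal{A}_i)=(\widehat U+\widehat\Lambda|U_i+\Lambda_i).
\]
We have $U_i,\widehat U\in\mathfrak{g}_0$ and $\Lambda_j\in\mathfrak{g}_1$. The form pairs $\mathfrak{g}_k$ only with $\mathfrak{g}_{-k}$, so the cross terms vanish. Also $(\Lambda_j|\Lambda_i)=0$, since both lie in degree $1$. Hence this term equals $(\widehat U|U_i)$.

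\emph{First term.} Since $(\vartheta|\mathfrak{g}_k)=0$ for $k\neq0$, only $\mathfrak{g}_0$-components of $\mathcal{B}_i$ contribute. We have
\[
(\vartheta|W_0XW_0^{-1})=(W_0^{-1}\vartheta W_0|X).
\]
Here $w_0\in\mathfrak{g}_0\cap\mathfrak{n}_-$ commutes with $\vartheta$, because $[\vartheta,\mathfrak{g}_0]=0$. So $W_0^{-1}\vartheta W_0=\vartheta$, and
\[
(\vartheta|W_0(U_i+\Lambda_i)W_0^{-1})=(\vartheta|U_i)+(\vartheta|\Lambda_i)=(\vartheta|U_i),
\]
using $\deg\Lambda_i=1$. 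For $(\vartheta|\mathcal{A}_i)$: $\mathcal{A}_i=\partial_i(W_0)W_0^{-1}$ is a series in iterated brackets of $w_0$ and $\partial_i(w_0)$, all lying in $\mathfrak{n}_-$. The invariant form pairs $\mathfrak{h}$ with $\mathfrak{n}_-$ trivially, since $\mathfrak{n}_-$ has only negative root components. Therefore $(\vartheta|\mathcal{A}_i)=0$.

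Combining the two pieces, the left-hand side of \eqref{Eq:DS_SR_Borel2} becomes $2(\vartheta|U_i)-(\widehat U|U_i)$, which vanishes by \eqref{Eq:DS_SR2}.

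The only point needing care is that $\vartheta$ is fixed by $\mathrm{Ad}\,W_0$ and that $\mathcal{A}_i$ has no $\mathfrak{h}$-component. This relies on $\mathfrak{g}_0\cap\mathfrak{n}_-$ being a nilpotent subalgebra and on the expansion $\partial(e^{w})e^{-w}=\sum_k\frac{1}{(k+1)!}(\mathrm{ad}\,w)^k(\partial w)$. Everything else is degree bookkeeping.
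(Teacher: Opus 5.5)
Your proof is correct and follows essentially the paper's route: conjugate by $W_0$, use invariance of the form, and reduce to relation \eqref{Eq:DS_SR2}. You run the computation in the opposite direction and make explicit the facts the paper uses silently, namely $W_0\vartheta W_0^{-1}=\vartheta$, $(\vartheta|\partial_i(W_0)W_0^{-1})=0$, and the vanishing of the degree-mismatched pairings involving $\Lambda_i$ and $\widehat\Lambda$.
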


\begin{proof}
We consider the gauge transformation in relation \eqref{Eq:DS_SR2}.
Then the first term of the left-hand side is transformed as
\begin{align*}
	(\vartheta|U_i) &= (W_0\vartheta W_0^{-1}|W_0U_iW_0^{-1}) \\
	&= (\vartheta|\mathcal{B}_i-\partial_i(W_0)W_0^{-1}-W_0\Lambda_iW_0^{-1}) \\
	&= (\vartheta|\mathcal{B}_i).
\end{align*}
The second term is also transformed as
\begin{align*}
	(\widehat{U}|U_i) &= (W_0\widehat{U}W_0^{-1}|W_0U_iW_0^{-1}) \\
	&= (\mathcal{M}-W_0\widehat{\Lambda}W_0^{-1}|\mathcal{B}_i-\partial_i(W_0)W_0^{-1}-W_0\Lambda_iW_0^{-1}) \\
	&= (\mathcal{M}|\mathcal{B}_i-\partial_i(W_0)W_0^{-1}).
\end{align*}
Hence we have proved the lemma.
\end{proof}

We call systems \eqref{Eq:DS_SR_Borel1} and \eqref{Eq:DS_SR_Borel2} a similarity reduction on $\mathfrak{b}_+$.

\subsection{Hamiltonian system}

In this subsection we regard the subalgebra $\mathfrak{g}_0$ of $\mathfrak{g}$ as that of $\mathfrak{gl}_{N+1}$.
Let
\[
	\varphi^{j}_{k,l} = (E^{j}_{l,k}|\mathcal{M}),\quad
	w^{j}_{l,k} = (E^{j}_{k,l}|W_0)\quad (j=0,\ldots,n,\ k,l=1,\ldots,m+1,\ k<l).
\]
We define a Poisson bracket for these functions by
\begin{equation}\begin{split}\label{Eq:Poisson_Borel}
	&\{\varphi^{j_1}_{k_1,l_1},w^{j_2}_{l_2,k_2}\} = \left\{\begin{array}{cc}-(n+1)\delta_{j_1,j_2}\delta_{l_1,l_2}w_{k_1,k_2}&(k_1=k_2,\ldots,l_1-1)\\[4pt]0&(\text{otherwise})\end{array}\right., \\
	&\{\varphi^{j_1}_{k_1,l_1},\varphi^{j_2}_{k_2,l_2}\} = -(n+1)\delta_{j_1,j_2}(\delta_{l_1,k_2}\varphi^{j_1}_{k_1,l_2}-\delta_{l_2,k_1}\varphi^{j_1}_{k_2,l_1}), \\
	&\{w^{j_1}_{l_1,k_1},w^{j_2}_{l_2,k_2}\} = 0, \\
\end{split}\end{equation}
where $\delta_{k,i}$ is the Kronecker delta.
We also assume that
\[
	\{\kappa_i,\varphi^{j}_{k,l}\} = 0\quad
	\{\kappa_i,w^{j}_{l,k}\} = 0\quad (i=0,\ldots,N).
\]

Let
\[
	\overline{w}^{j}_{l,k} = (E^{j}_{k,l}|W_0^{-1})\quad (j=0,\ldots,n,\ k,l=1,\ldots,m+1,\ k<l).
\]
Since $\det W_0=1$, each $\overline{w}^{j}_{l,k}$ is a polynomial in $(w^{j}_{l,k})$.
The following lemma will be used in the proof of the main theorem.

\begin{lem}
Relations
\[
	\{\varphi^{j_1}_{k_1,l_1},\overline{w}^{j_2}_{l_2,k_2}\} = \left\{\begin{array}{cc}(n+1)\delta_{j_1,j_2}\delta_{k_1,k_2}\overline{w}_{l_2,l_1}&(l_1=k_1+1,\ldots,l_2)\\[4pt]0&(\text{otherwise})\end{array}\right.
\]
hold.
\end{lem}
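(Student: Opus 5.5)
The plan is to derive the bracket of $\varphi$ with $\overline{w}$ from the given bracket \eqref{Eq:Poisson_Borel} of $\varphi$ with $w$, using the identity $W_0W_0^{-1}=1$. Both $\{\varphi^{j_1}_{k_1,l_1},\cdot\}$ and $\partial_i$ are derivations. Write $W_0$ and $W_0^{-1}$ blockwise: for each $j$ they restrict to lower unitriangular $(m+1)\times(m+1)$ matrices $W^{j}$ and $(W^{j})^{-1}$, with entries $w^{j}_{l,k}$ and $\overline{w}^{j}_{l,k}$ in position $(l,k)$, $k<l$. Only blocks with the same $j$ interact, which produces the factor $\delta_{j_1,j_2}$; from now on I fix $j=j_1=j_2$ and drop it.

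First I reinterpret \eqref{Eq:Poisson_Borel} as a matrix identity. Let $D=\{\varphi_{k_1,l_1},W\}$ be the matrix obtained by applying the bracket entrywise. The rule says that $\{\varphi_{k_1,l_1},w_{l_1,k_2}\}=-(n+1)w_{k_1,k_2}$ for $k_2\le k_1<l_1$, and that all other brackets vanish. Reading $w_{k,k}=1$ on the diagonal, this means
\[
	D=-(n+1)E_{l_1,k_1}W ,
\]
with $E$ the elementary matrix of the block. Indeed, row $l_1$ of $E_{l_1,k_1}W$ is row $k_1$ of $W$, whose entries are $w_{k_1,k_2}$ for $k_2\le k_1$. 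At $k_2=k_1$ this gives the diagonal-style entry $1$, which matches the case $k_1=k_2$ of the rule. I will check this bookkeeping carefully, including that the range ``$k_1=k_2,\ldots,l_1-1$'' corresponds exactly to the lower-triangular support of row $k_1$.

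Next I differentiate the inverse. Applying the derivation to $WW^{-1}=1$ gives
\[
	\{\varphi,W^{-1}\}=-W^{-1}DW^{-1}=(n+1)W^{-1}E_{l_1,k_1}.
\]
The $(l_2,k_2)$ entry of the right-hand side is $(n+1)\,\overline{w}_{l_2,l_1}\delta_{k_1,k_2}$. Since $W^{-1}$ is lower unitriangular, this entry is nonzero only when $l_1\le l_2$, and it equals $1$ when $l_1=l_2$. Combined with $k_1<l_1$, this gives exactly the stated range $l_1=k_1+1,\ldots,l_2$, with value $(n+1)\delta_{k_1,k_2}\overline{w}_{l_2,l_1}$.

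The main obstacle is the index bookkeeping in the first step. The paper's notation $w^{j}_{l,k}=(E^{j}_{k,l}|W_0)$ picks out the $(l,k)$ entry, and for the conclusion to hold it must be $E_{l_1,k_1}$ acting on the left of $W$, not $W$ times $E$ on the right. I will verify this orientation against the stated rule before carrying out the derivative-of-inverse step; if the convention turns out to be transposed, the same argument goes through with $W^{-1}$ multiplied on the other side. The appeal to $\det W_0=1$ only guarantees that the entries of $W_0^{-1}$ are polynomials in the $w$'s, so that the Leibniz rule applies.
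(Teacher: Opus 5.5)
Your argument is correct and uses the same identity as the paper's proof: the Leibniz rule applied to $W_0W_0^{-1}=1$ blockwise. The paper expands $\sum_{r}w_{l_2,r}\overline{w}_{r,k_2}=\delta_{k_2,l_2}$ entrywise and checks that the claimed formula is consistent with it. Your observation that $\{\varphi_{k_1,l_1},W\}=-(n+1)E_{l_1,k_1}W$ instead turns this into the one-line derivation $\{\varphi_{k_1,l_1},W^{-1}\}=(n+1)W^{-1}E_{l_1,k_1}$, which is cleaner and makes the range $k_1<l_1\le l_2$ transparent.
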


\begin{proof}
Since it is enough to verify for the case $j_1=j_2$, we omit the index $j$.
We have
\[
	\sum_{r=1}^{m+1}w_{l,r}\overline{w}_{r,k} = \sum_{r=k}^{l}w_{l,r}\overline{w}_{r,k} = \delta_{k,l}.
\]
Hence it is enough to verify that
\[
	\sum_{r=k_2}^{l_2}\{\varphi_{k_1,l_1},w_{l_2,r}\overline{w}_{r,k_2}\} = 0
\]
for $k_1<l_1$ and $k_2<l_2$.
Its left-hand side is rewritten as
\begin{equation}\label{Eq:Poisson_Borel_Lem_Pf_1}\begin{split}
	&\sum_{r=k_2}^{l_2}\{\varphi_{k_1,l_1},w_{l_2,r}\overline{w}_{r,k_2}\} \\
	&= \{\varphi_{k_1,l_1},w_{l_2,k_2}\} + \sum_{r=k_2+1}^{l_2-1}\{\varphi_{k_1,l_1},w_{l_2,r}\overline{w}_{r,k_2}\} + \{\varphi_{k_1,l_1},\overline{w}_{l_2,k_2}\} \\
	&= \{\varphi_{k_1,l_1},w_{l_2,k_2}\} + \sum_{r=k_2+1}^{l_2-1}\overline{w}_{r,k_2}\{\varphi_{k_1,l_1},w_{l_2,r}\} + \sum_{r=k_2+1}^{l_2-1}w_{l_2,r}\{\varphi_{k_1,l_1},\overline{w}_{r,k_2}\} + \{\varphi_{k_1,l_1},\overline{w}_{l_2,k_2}\}.
\end{split}\end{equation}
Then the first and second term of RHS of \eqref{Eq:Poisson_Borel_Lem_Pf_1} is rewritten as
\begin{align*}
	\{\varphi_{k_1,l_1},w_{l_2,k_2}\} + \sum_{r=k_2+1}^{l_2-1}\overline{w}_{r,k_2}\{\varphi_{k_1,l_1},w_{l_2,r}\} &= -(n+1)\delta_{l_1,l_2}\left(w_{k_1,k_2}+\sum_{r=k_2+1}^{\min\{k_1,l_2-1\}}w_{k_1,r}\overline{w}_{r,k_2}\right) \\
	&= (n+1)\delta_{l_1,l_2}\sum_{r=l_2}^{k_1}w_{k_1,r}\overline{w}_{r,k_2} \\
	&= (n+1)\delta_{l_1,l_2}\sum_{r=l_1}^{k_1}w_{k_1,r}\overline{w}_{r,k_2}.
\end{align*}
Its right-hand side turns out to be zero.
We can show that the third and fourth term of RHS of \eqref{Eq:Poisson_Borel_Lem_Pf_1} turns out to be zero in a similar manner.
Hence we have proved the lemma.
\end{proof}

\begin{thm}
Poisson bracket \eqref{Eq:Poisson_Borel} implies that \eqref{Eq:Poisson}.
\end{thm}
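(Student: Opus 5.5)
Since $\widehat U = W_0^{-1}(\mathcal M - \mathcal M_{\geq 1})W_0 + \dots$, I will express the $\mathfrak g_0$-component of $\widehat U$ through the coordinates. Relation \eqref{Eq:Gauge_Borel} and the fact that $W_0$ has degree $0$ give the degree-$0$ identity
\[
	\widehat U = W_0^{-1}\,\mathcal M_0\,W_0 ,
\]
where $\mathcal M_0$ is the $\mathfrak g_0$-part of $\mathcal M$. The block structure is blockwise in $j$: $\mathcal M_0$ is upper triangular with diagonal $\kappa$'s and off-diagonal entries $\varphi^{j}_{k,l}$, and $W_0$ is lower unitriangular with entries $w^{j}_{l,k}$. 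Hence
\[
	(E^{j}_{l,k}|\widehat U) = \sum_{r,s}\overline w^{j}_{k,r}\,(E^{j}_{s,r}|\mathcal M_0)\,w^{j}_{s,l},
\]
a polynomial in $\varphi,w,\overline w,\kappa$. The claim is that the bracket \eqref{Eq:Poisson_Borel}, together with the preceding Lemma for $\{\varphi,\overline w\}$ and the Casimir property of $\kappa$, reproduces $\{(x|\widehat U),(y|\widehat U)\}=(n+1)([y,x]|\widehat U)$ for $x,y\in\mathfrak e_0$. Since all brackets vanish for different $j$, I will fix $j$ and drop it. The problem then reduces to a single copy of $\mathfrak{gl}_{m+1}$ with a factor $(n+1)$.

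The conceptual route is to reinterpret \eqref{Eq:Poisson_Borel} in matrix form. I first check that the $\varphi$–$\varphi$ bracket is the Lie–Poisson bracket on the nilpotent part of $\mathfrak b_+\cap\mathfrak g_0$. Together with the $\kappa$'s it then becomes the Lie–Poisson bracket on $\mathfrak b_+\cap\mathfrak g_0$ paired with strictly lower matrices. The $\varphi$–$w$ bracket says that $\varphi$ acts on $W_0$ by infinitesimal right/left multiplication by the elementary matrices of $\mathfrak n_-$; the preceding Lemma is the corresponding statement for $W_0^{-1}$. In other words, $\{(\,\xi|\mathcal M),W_0\}=(n+1)\,W_0\,\xi'$ for suitable $\xi$, up to sign. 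With $w$–$w$ brackets zero, this is the cotangent-bundle structure $T^*N_-\times\mathfrak b_+^*$ restricted appropriately. The map $(W_0,\mathcal M_0)\mapsto W_0^{-1}\mathcal M_0W_0$ is the moment map for the $N_-$-action combined with coadjoint action. Moment maps of Hamiltonian actions are Poisson onto the Lie–Poisson structure, which is the desired statement.

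For a rigorous proof in the paper's style, I will instead verify directly on generators. Write $(x|\widehat U)=(W_0xW_0^{-1}|\mathcal M_0)$. For $x,y\in\mathfrak e_0$ I expand
\[
	\{(W_0xW_0^{-1}|\mathcal M_0),(W_0yW_0^{-1}|\mathcal M_0)\}
\]
by the Leibniz rule into three kinds of terms. The $\varphi\varphi$ terms give $(n+1)([W_0yW_0^{-1},W_0xW_0^{-1}]_{\mathfrak b}|\mathcal M_0)$, projected onto the part paired with $\mathfrak n_+$. The mixed $\varphi w$ and $\varphi\overline w$ terms come from \eqref{Eq:Poisson_Borel} and the Lemma. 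Each one produces a commutator of an $\mathfrak n_-$-projection of $W_0xW_0^{-1}$ with $W_0yW_0^{-1}$, paired against $\mathcal M_0$. The $ww$ terms vanish.

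Summing, the projections should recombine via $z=z_{\mathfrak n_-}+z_{\mathfrak b_+}$ into the full $(n+1)([W_0yW_0^{-1},W_0xW_0^{-1}]|\mathcal M_0)$, which equals $(n+1)([y,x]|\widehat U)$ by invariance of the form. I expect the main obstacle to be exactly this bookkeeping. It requires checking that the index ranges in \eqref{Eq:Poisson_Borel} (the condition $k_1=k_2,\dots,l_1-1$) and in the Lemma (the condition $l_1=k_1+1,\dots,l_2$) match the two projections with correct signs. I will handle it by first treating $x=E_{k,l}$ with $W_0$ equal to a single elementary factor, and then using multiplicativity.
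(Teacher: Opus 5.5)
Your strategy is the paper's: write $(x|\widehat U)=(W_0xW_0^{-1}|\mathcal M_0)$, expand by the Leibniz rule into $\varphi\varphi$, $\varphi w$ and $\varphi\overline w$ terms, and show they collapse to $(n+1)([y,x]|\widehat U)$. The paper does this entrywise with telescoping sums; you propose to do it in matrix form. There is a gap, though. Your proposal stops at exactly the step that carries the content (``the projections should recombine''), and the one formula you state for it is wrong, so as written the argument does not close.

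Put $X=W_0xW_0^{-1}=X_-+X_+$ with $X_-\in\mathfrak n_-$ and $X_+\in\mathfrak b_+$, and likewise for $Y$. Bracket \eqref{Eq:Poisson_Borel} and the preceding Lemma say precisely that, for $\xi,\eta\in\mathfrak n_-$,
\[
\{(\xi|\mathcal M),(\eta|\mathcal M)\}=(n+1)([\xi,\eta]|\mathcal M),\quad
\{(\xi|\mathcal M),W_0\}=-(n+1)\,\xi W_0,\quad
\{(\xi|\mathcal M),W_0^{-1}\}=(n+1)\,W_0^{-1}\xi .
\]
So $\varphi$ generates left multiplication on $W_0$. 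The $\varphi\varphi$ bracket is Lie--Poisson with the order $[\xi,\eta]$, which is opposite to the order $[y,x]$ in \eqref{Eq:Poisson}. The diagonal of $\mathcal M_0$ consists of Casimirs, so only $X_-$ and $Y_-$ enter the $\mathcal M$-dependence. Hence the $\varphi\varphi$ contribution is $(n+1)([X_-,Y_-]|\mathcal M_0)$, not a projection of $[Y,X]$ as you wrote. The two mixed contributions are $-(n+1)([X_-,Y]|\mathcal M_0)$ and $(n+1)([Y_-,X]|\mathcal M_0)$. Their sum is
\[
(n+1)\bigl([Y,X]-[Y_+,X_+]\bigm|\mathcal M_0\bigr).
\]
The idea missing from your sketch is that $([Y_+,X_+]|\mathcal M_0)=0$, because $[\mathfrak b_+,\mathfrak b_+]\subset\mathfrak n_+$ is orthogonal to $\mathfrak b_+\ni\mathcal M_0$. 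Invariance of the form then gives $([Y,X]|\mathcal M_0)=([y,x]|W_0^{-1}\mathcal M_0W_0)=([y,x]|\widehat U)$. With these corrections your matrix computation is a complete proof, and much shorter than the paper's index manipulations.

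Two other parts of the proposal should be dropped.
\begin{itemize}
\item The moment-map paragraph is circular. Saying that $(W_0,\mathcal M_0)\mapsto W_0^{-1}\mathcal M_0W_0$ is an equivariant moment map of a Hamiltonian action is equivalent to the statement being proved, and you never exhibit such an action.
\item The plan to take $W_0$ equal to a single elementary factor and then use ``multiplicativity'' does not work. The claim is an identity between functions of all the $w^{j}_{l,k}$, which cannot be verified on a special subfamily of $W_0$'s. Moreover, Poisson brackets of functions of $W_0$ do not factor along a product decomposition of $W_0$.
\end{itemize}
Neither device is needed once the three matrix identities above are written down.
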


\begin{proof}
We omit the index $j$ of $\varphi^{j}_{k,l}$, $w^{j}_{l,k}$ and $\overline{w}^{j}_{l,k}$ as well as the proof of the previous lemma.
It is enough to verify that
\[
	\{(E^{j}_{l_1,k_1}|\widehat{U}),(E^{j}_{l_2,k_2}|\widehat{U})\} = (n+1)([E^{j}_{l_2,k_2},E^{j}_{l_1,k_1}]|\widehat{U}).
\]
Its left-hand side is rewritten as
\begin{equation}\label{Eq:Poisson_Borel_Thm_Pf_1}\begin{split}
	\{(E^{j}_{l_1,k_1}|\widehat{U}),(E^{j}_{l_2,k_2}|\widehat{U})\} &= \left\{(E^{j}_{l_1,k_1}|W_0^{-1}\mathcal{M}W_0),(E^{j}_{l_2,k_2}|W_0^{-1}\mathcal{M}W_0)\right\} \\
	&= \sum_{r_1=1}^{k_1}\sum_{s_1=l_1}^{m+1}\sum_{r_2=1}^{k_2}\sum_{s_2=l_2}^{m+1}\{\overline{w}_{k_1,r_1}\varphi_{r_1,s_1}w_{s_1,l_1},\overline{w}_{k_2,r_2}\varphi_{r_2,s_2}w_{s_2,l_2}\} \\
	&= \sum_{r_1=1}^{k_1}\sum_{s_1=l_1}^{m+1}\sum_{r_2=1}^{k_2}\sum_{s_2=l_2}^{m+1}\varphi_{r_1,s_1}w_{s_1,l_1}\overline{w}_{k_2,r_2}w_{s_2,l_2}\{\overline{w}_{k_1,r_1},\varphi_{r_2,s_2}\} \\
	&\quad + \sum_{r_1=1}^{k_1}\sum_{s_1=l_1}^{m+1}\sum_{r_2=1}^{k_2}\sum_{s_2=l_2}^{m+1}\overline{w}_{k_1,r_1}\varphi_{r_1,s_1}\overline{w}_{k_2,r_2}w_{s_2,l_2}\{w_{s_1,l_1},\varphi_{r_2,s_2}\} \\
	&\quad + \sum_{r_1=1}^{k_1}\sum_{s_1=l_1}^{m+1}\sum_{r_2=1}^{k_2}\sum_{s_2=l_2}^{m+1}\overline{w}_{k_1,r_1}w_{s_1,l_1}\varphi_{r_2,s_2}w_{s_2,l_2}\{\varphi_{r_1,s_1},\overline{w}_{k_2,r_2}\} \\
	&\quad + \sum_{r_1=1}^{k_1}\sum_{s_1=l_1}^{m+1}\sum_{r_2=1}^{k_2}\sum_{s_2=l_2}^{m+1}\overline{w}_{k_1,r_1}w_{s_1,l_1}\overline{w}_{k_2,r_2}\varphi_{r_2,s_2}\{\varphi_{r_1,s_1},w_{s_2,l_2}\} \\
	&\quad + \sum_{r_1=1}^{k_1}\sum_{s_1=l_1}^{m+1}\sum_{r_2=1}^{k_2}\sum_{s_2=l_2}^{m+1}\overline{w}_{k_1,r_1}w_{s_1,l_1}\overline{w}_{k_2,r_2}w_{s_2,l_2}\{\varphi_{r_1,s_1},\varphi_{r_2,s_2}\}.
\end{split}\end{equation}
The second term of RHS of \eqref{Eq:Poisson_Borel_Thm_Pf_1} is rewritten as
\begin{equation}\label{Eq:Poisson_Borel_Thm_Pf_2}\begin{split}
	&\sum_{r_1=1}^{k_1}\sum_{s_1=l_1}^{m+1}\sum_{r_2=1}^{k_2}\sum_{s_2=l_2}^{m+1}\overline{w}_{k_1,r_1}\varphi_{r_1,s_1}\overline{w}_{k_2,r_2}w_{s_2,l_2}\{w_{s_1,l_1},\varphi_{r_2,s_2}\} \\
	&= -\sum_{r_1=1}^{k_1}\sum_{s_1=\max\{l_1+1,l_2\}}^{m+1}\overline{w}_{k_1,r_1}\varphi_{r_1,s_1}w_{s_1,l_2}\sum_{r_2=1}^{\min\{s_2-1,k_2\}}\overline{w}_{k_2,r_2}\{\varphi_{r_2,s_1},w_{s_1,l_1}\} \\
	&= (n+1)\sum_{r=1}^{k_1}\sum_{s=\max\{l_1+1,l_2\}}^{m+1}\overline{w}_{k_1,r}\varphi_{r,s}w_{s,l_2}\sum_{t=l_1}^{\min\{s-1,k_2\}}\overline{w}_{k_2,t}w_{t,l_1} \\
	&= (n+1)\sum_{r=1}^{k_1}\sum_{s=k_2}^{m+1}\overline{w}_{k_1,r}\varphi_{r,s}w_{s,l_2}\sum_{t=l_1}^{k_2}\overline{w}_{k_2,t}w_{t,l_1} - (n+1)\sum_{r=1}^{k_1}\sum_{s=l_2}^{m+1}\overline{w}_{k_1,r}\varphi_{r,s}w_{s,l_2}\sum_{t=s}^{k_2}\overline{w}_{k_2,t}w_{t,l_1} \\
	&= (n+1)\delta_{k_2,l_1}\sum_{r=1}^{k_1}\sum_{s=l_2}^{m+1}\overline{w}_{k_1,r}\varphi_{r,s}w_{s,l_2} - (n+1)\sum_{r=1}^{k_1}\sum_{s=l_2}^{m+1}\overline{w}_{k_1,r}\varphi_{r,s}w_{s,l_2}\sum_{t=s}^{k_2}\overline{w}_{k_2,t}w_{t,l_1}.
\end{split}\end{equation}
Similarly, the third term of RHS of \eqref{Eq:Poisson_Borel_Thm_Pf_1} is rewritten as
\begin{equation}\label{Eq:Poisson_Borel_Thm_Pf_3}\begin{split}
	&\sum_{r_1=1}^{k_1}\sum_{s_1=l_1}^{m+1}\sum_{r_2=1}^{k_2}\sum_{s_2=l_2}^{m+1}\overline{w}_{k_1,r_1}w_{s_1,l_1}\varphi_{r_2,s_2}w_{s_2,l_2}\{\varphi_{r_1,s_1},\overline{w}_{k_2,r_2}\} \\
	&= (n+1)\delta_{k_2,l_1}\sum_{r=1}^{k_1}\sum_{s=l_2}^{m+1}\overline{w}_{k_1,r}\varphi_{r,s}w_{s,l_2} - (n+1)\sum_{r=1}^{k_1}\sum_{s=l_2}^{m+1}\overline{w}_{k_1,r}\varphi_{r,s}w_{s,l_2}\sum_{t=l_1}^{r}\overline{w}_{k_2,t}w_{t,l_1}.
\end{split}\end{equation}
On the other hand, the fifth term of RHS of \eqref{Eq:Poisson_Borel_Thm_Pf_1} is rewritten as
\begin{equation}\label{Eq:Poisson_Borel_Thm_Pf_4}\begin{split}
	&\sum_{r_1=1}^{k_1}\sum_{s_1=l_1}^{m+1}\sum_{r_2=1}^{k_2}\sum_{s_2=l_2}^{m+1}\overline{w}_{k_1,r_1}w_{s_1,l_1}\overline{w}_{k_2,r_2}w_{s_2,l_2}\{\varphi_{r_1,s_1},\varphi_{r_2,s_2}\} \\
	&= \sum_{r_1=1}^{k_1}\sum_{s_2=l_2}^{m+1}\overline{w}_{k_1,r_1}w_{s_2,l_2}\sum_{s_1=\max\{r_1+1,l_1\}}^{\min\{s_2-1,k_2\}}w_{s_1,l_1}\overline{w}_{k_2,s_1}\varphi_{r_1,s_2} \\
	&\quad - \sum_{r_2=1}^{k_2}\sum_{s_1=l_1}^{m+1}\overline{w}_{k_2,r_2}w_{s_1,l_1}\sum_{r_1=\max\{r_2+1,l_2\}}^{\min\{s_2-1,k_1\}}\overline{w}_{k_1,r_1}w_{r_1,l_2}\varphi_{r_2,s_2}.
\end{split}\end{equation}
The first term of RHS of \eqref{Eq:Poisson_Borel_Thm_Pf_4} is rewritten as
\begin{equation}\label{Eq:Poisson_Borel_Thm_Pf_5}\begin{split}
	&\sum_{r_1=1}^{k_1}\sum_{s_2=l_2}^{m+1}\overline{w}_{k_1,r_1}w_{s_2,l_2}\sum_{s_1=\max\{r_1+1,l_1\}}^{\min\{s_2-1,k_2\}}w_{s_1,l_1}\overline{w}_{k_2,s_1}\varphi_{r_1,s_2} \\
	&= - (n+1)\sum_{r=1}^{k_1}\sum_{s=l_2}^{m+1}\overline{w}_{k_1,r}\varphi_{r,s}w_{s,l_2}\sum_{t=\max\{r+1,l_1\}}^{\min\{s-1,k_2\}}\overline{w}_{k_2,t}w_{t,l_1} \\
	&= -(n+1)\delta_{k_2,l_1}\sum_{r=1}^{k_1}\sum_{s=l_2}^{m+1}\overline{w}_{k_1,r}\varphi_{r,s}w_{s,l_2} \\
	&\quad + (n+1)\sum_{r=1}^{k_1}\sum_{s=l_2}^{m+1}\overline{w}_{k_1,r}\varphi_{r,s}w_{s,l_2}\sum_{t=l_1}^{r}\overline{w}_{k_2,t}w_{t,l_1} + (n+1)\sum_{r=1}^{k_1}\sum_{s=l_2}^{m+1}\overline{w}_{k_1,r}\varphi_{r,s}w_{s,l_2}\sum_{t=s}^{k_2}\overline{w}_{k_2,t}w_{t,l_1}.
\end{split}\end{equation}
Then we obtain
\begin{equation}\label{Eq:Poisson_Borel_Thm_Pf_6}
	(\text{RHS of \eqref{Eq:Poisson_Borel_Thm_Pf_2}}) + (\text{RHS of \eqref{Eq:Poisson_Borel_Thm_Pf_3}}) + (\text{RHS of \eqref{Eq:Poisson_Borel_Thm_Pf_5}}) = (n+1)\delta_{k_2,l_1}\sum_{r=1}^{k_1}\sum_{s=l_2}^{m+1}\overline{w}_{k_1,r}\varphi_{r,s}w_{s,l_2}.
\end{equation}

In a similar manner, we obtain
\begin{equation}\label{Eq:Poisson_Borel_Thm_Pf_7}\begin{split}
	&\sum_{r_1=1}^{k_1}\sum_{s_1=l_1}^{m+1}\sum_{r_2=1}^{k_2}\sum_{s_2=l_2}^{m+1}\overline{w}_{k_1,r_1}w_{s_1,l_1}\overline{w}_{k_2,r_2}\varphi_{r_2,s_2}\{\varphi_{r_1,s_1},w_{s_2,l_2}\} \\
	&+ \sum_{r_1=1}^{k_1}\sum_{s_1=l_1}^{m+1}\sum_{r_2=1}^{k_2}\sum_{s_2=l_2}^{m+1}\varphi_{r_1,s_1}w_{s_1,l_1}\overline{w}_{k_2,r_2}w_{s_2,l_2}\{\overline{w}_{k_1,r_1},\varphi_{r_2,s_2}\} \\
	&- \sum_{r_2=1}^{k_2}\sum_{s_1=l_1}^{m+1}\overline{w}_{k_2,r_2}w_{s_1,l_1}\sum_{r_1=\max\{r_2+1,l_2\}}^{\min\{s_2-1,k_1\}}\overline{w}_{k_1,r_1}w_{r_1,l_2}\varphi_{r_2,s_2} \\
	&= -\delta_{k_1,l_2}\sum_{r=1}^{k_2}\sum_{s=l_1}^{m+1}\overline{w}_{k_2,r}\varphi_{r,s}w_{s,l_1}.
\end{split}\end{equation}
from the fourth and first term of RHS of \eqref{Eq:Poisson_Borel_Thm_Pf_1} and the second term of RHS of \eqref{Eq:Poisson_Borel_Thm_Pf_4}.
Relations \eqref{Eq:Poisson_Borel_Thm_Pf_6} and \eqref{Eq:Poisson_Borel_Thm_Pf_7} imply
\begin{align*}
	\{(E^{j}_{l_1,k_1}|\widehat{U}),(E^{j}_{l_2,k_2}|\widehat{U})\} &= (n+1)\delta_{k_2,l_1}\sum_{r=1}^{k_1}\sum_{s=l_2}^{m+1}\overline{w}_{k_1,r}\varphi_{r,s}w_{s,l_2} - (n+1)\delta_{k_1,l_2}\sum_{r=1}^{k_2}\sum_{s=l_1}^{m+1}\overline{w}_{k_2,r}\varphi_{r,s}w_{s,l_1}. \\
	&= (n+1)\delta_{k_2,l_1}(E^{j}_{l_2,k_1}|W_0^{-1}\mathcal{M}W_0) - (n+1)\delta_{k_1,l_2}(E^{j}_{l_1,k_2}|W_0^{-1}\mathcal{M}W_0) \\
	&= (n+1)\delta_{k_2,l_1}(E^{j}_{l_2,k_1}|\widehat{U}) - \delta_{k_1,l_2}(E^{j}_{l_1,k_2}|\widehat{U}) \\
	&= (n+1)([E^{j}_{l_2,k_2},E^{j}_{l_1,k_1}]|\widehat{U}).
\end{align*}
Hence we have proved the theorem.
\end{proof}

\begin{cor}
System \eqref{Eq:DS_SR_Borel1} is expressed as a Hamiltonian system
\begin{equation}\label{Eq:DS_SR_Borel_Ham}\begin{split}
	\partial_i(\varphi^{j}_{k,l}) = \{H_i,\varphi^{j}_{k,l}\},\quad
	\partial_i(w^{j}_{k,l}) = \{H_i,w^{j}_{k,l}\},\quad
	H_i = \frac{(U_i|\widehat{U})}{2(n+1)} = \frac{(\mathcal{B}_i-\partial_i(W_0)W_0^{-1}|\mathcal{M})}{2(n+1)} \\
	(i=1,\ldots,m+1,\ j=0,\ldots,n,\ k,l=1,\ldots,m+1,\ k<l).
\end{split}\end{equation}
\end{cor}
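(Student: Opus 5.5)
We deduce the corollary from the Hamiltonian system \eqref{Eq:DS_SR_Ham} of Section \ref{Sec:DS}, together with the theorem just proved, which says that the Poisson bracket \eqref{Eq:Poisson_Borel} on the variables $(\varphi^{j}_{k,l},w^{j}_{l,k})$ induces the Lie Poisson bracket \eqref{Eq:Poisson} on the entries of $\widehat{U}=W_0^{-1}(\mathcal{M}-\vartheta)W_0+\vartheta$, i.e.\ on $(x|\widehat{U})=(x|W_0^{-1}\mathcal{M}W_0)$ modulo the constants.

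\textbf{Step 1: the two expressions for $H_i$ agree.} Since $W_0$ is invertible, $(U_i|\widehat{U})=(W_0U_iW_0^{-1}|W_0\widehat{U}W_0^{-1})$. Substitute $W_0U_iW_0^{-1}=\mathcal{B}_i-\partial_i(W_0)W_0^{-1}-W_0\Lambda_iW_0^{-1}$ and $W_0\widehat{U}W_0^{-1}=\mathcal{M}-W_0\widehat{\Lambda}W_0^{-1}$, exactly as in the proof of relation \eqref{Eq:DS_SR_Borel2}. The cross terms vanish by degree: $\Lambda_i,\widehat{\Lambda}$ have degree $1$, the other factors lie in degree $0$ or pair trivially, and $W_0$ is degree $0$. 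This gives $(U_i|\widehat{U})=(\mathcal{B}_i-\partial_i(W_0)W_0^{-1}|\mathcal{M})$.

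\textbf{Step 2: the flow of $\widehat{U}$.} By the theorem just proved, the map $(\varphi,w)\mapsto\widehat{U}$ is Poisson. Hence for $x\in\mathfrak{e}_0$,
\[
\{H_i,(x|\widehat{U})\}_{\eqref{Eq:Poisson_Borel}}=\{H_i,(x|\widehat{U})\}_{\eqref{Eq:Poisson}}=\partial_i((x|\widehat{U})).
\]
This uses that $H_i$ is a function of $\widehat{U}$ alone via Lemma \ref{Lem:DS_SR_Uhat_Ui}, since $U_i$ is determined by $\widehat{U}$, and it uses the previous Hamiltonian theorem. So the Hamiltonian flow of \eqref{Eq:Poisson_Borel} reproduces the correct evolution of $\widehat{U}$.

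\textbf{Step 3: lifting to $(\varphi,w)$.} We must show the flow gives the correct $\partial_i(w)$ and $\partial_i(\varphi)$, not only the correct $\partial_i\widehat{U}$.

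\emph{The $w$-variables.} The $w$'s Poisson-commute among themselves, so $\{H_i,w^{j}_{l,k}\}$ comes only from the $\varphi$-dependence of $H_i$. Writing $H_i=\tfrac{1}{2(n+1)}(\mathcal{B}_i-\partial_i(W_0)W_0^{-1}|\mathcal{M})$, the quantity $\partial H_i/\partial\varphi$ is essentially the strictly lower part of $W_0U_iW_0^{-1}$ relative to $\mathfrak{b}_+$. The bracket $\{\varphi,w\}$ in \eqref{Eq:Poisson_Borel} acts as left multiplication on $W_0$ by lower nilpotent generators. Together these give $\{H_i,W_0\}=-(W_0U_iW_0^{-1})_{<0}W_0$, which equals $\partial_i(W_0)$ because $\mathcal{B}_i\in\mathfrak{b}_+$ forces $\partial_i(W_0)W_0^{-1}=-(W_0U_iW_0^{-1})_{\mathfrak{n}_-\cap\mathfrak{g}_0}$.

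\emph{The $\varphi$-variables.} Once $\partial_iW_0$ and $\partial_i\widehat{U}$ are correct, $\partial_i\mathcal{M}=\partial_i(W_0\widehat{U}W_0^{-1})$ is forced by the Leibniz rule, since $\mathcal{M}$ is a function of $W_0$ and $\widehat{U}$. The Hamiltonian derivation satisfies the same Leibniz rule, so $\{H_i,\varphi\}=\partial_i\varphi$. Here the $\kappa$'s are Casimirs, consistent with the preceding proposition.

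\textbf{Main obstacle.} The hard step is the $w$-equation in Step 3: matching the index ranges in \eqref{Eq:Poisson_Borel} with the projection $(W_0U_iW_0^{-1})_{<0}$. I would verify it entrywise using the triangular structure, as in the preceding lemma on $\overline{w}$.
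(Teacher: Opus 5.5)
Your Steps 1 and 2 are how the paper obtains the corollary. The paper gives no separate proof: it reads the result off from the theorem that \eqref{Eq:Poisson_Borel} induces \eqref{Eq:Poisson}, from the Hamiltonian theorem \eqref{Eq:DS_SR_Ham}, and from the identity $(U_i|\widehat{U})=(\mathcal{M}|\mathcal{B}_i-\partial_i(W_0)W_0^{-1})$ already computed in the proof of \eqref{Eq:DS_SR_Borel2}.

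Your Step 3 is an addition that the paper leaves tacit. The paper implicitly treats $(\varphi,w,\kappa)$ as local coordinates on the $\widehat{U}$-space. This is justified because both sides have the same dimension, and the decomposition $\widehat{U}=W_0^{-1}\mathcal{M}_0W_0$ (with $\mathcal{M}_0$ the degree-$0$ part) is locally unique when each block of $\mathcal{M}_0$ has distinct diagonal entries. A Poisson map that is a local diffeomorphism automatically lifts the Hamiltonian flow, so nothing more is needed. Your route avoids this genericity and dimension argument and checks the lift directly. It also shows exactly where the gauge condition $\mathcal{B}_i\in\mathfrak{b}_+$ from the preceding proposition enters, namely in fixing $\partial_i(W_0)$.

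Two remarks on Step 3.

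\textbf{The index matching is not the obstacle.} Because $W_0$ is blockwise lower unitriangular, \eqref{Eq:Poisson_Borel} says precisely $\{\varphi^{j}_{k,l},W_0\}=-(n+1)E^{j}_{l,k}W_0$, so the index matching you flag is trivial.

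\textbf{The step that needs an argument is your ``essentially''.} The formula $\partial H_i/\partial\varphi^{j}_{k,l}=\frac{1}{n+1}(E^{j}_{k,l}|W_0U_iW_0^{-1})$ holds only because the gradient of $\frac12(U_i|\widehat{U})$ with respect to $\widehat{U}$ is $U_i$ itself. That requires two facts:
\begin{itemize}
\item the linear map $\widehat{U}-\rho\mapsto U_i$ of Lemma \ref{Lem:DS_SR_Uhat_Ui} is self-adjoint for the trace form;
\item $(\rho|U_i)=0$.
\end{itemize}
Together these are the content of relation \eqref{Eq:DS_SR_Ham_Proof_1}. If you differentiate $(W_0U_iW_0^{-1}|\mathcal{M})/(2(n+1))$ with $U_i$ held fixed, you get only half of the required flow. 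With that relation cited, you do get $\{H_i,W_0\}=-(W_0U_iW_0^{-1})_{\mathfrak{n}_-\cap\mathfrak{g}_0}W_0=\partial_i(W_0)$, and your Leibniz argument for $\varphi$ then goes through.

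A minor slip: $W_0^{-1}(\mathcal{M}-\vartheta)W_0+\vartheta$ equals $\widehat{U}+\widehat{\Lambda}$, not $\widehat{U}$. This is harmless, since you only pair it with $x\in\mathfrak{e}_0$.
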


\begin{rem}
Noumi and Yamada proposed a Poisson bracket for the $\mathfrak{b}_+$-valued function $\mathcal{M}$ by
\[
	\{(x|\mathcal{M}),(y|\mathcal{M})\} = (n+1)([x,y]|\mathcal{M})\quad (x,y\in\mathfrak{n}_{-})
\]
in the previous work \cite{NouY3}.
This Poisson bracket is equivalent to that \eqref{Eq:Poisson_Borel}.
It can be shown by a direct calculation, however we omit its detail here.
\end{rem}

\subsection{Canonical coordinate system}\label{Subsec:Cano_coor}

In this subsection we regard the subalgebra $\mathfrak{g}_0$ of $\mathfrak{g}$ as that of $\mathfrak{gl}_{N+1}$.
Let
\[
	\mu^{j}_{k,l} = -\varphi^{j}_{k,l} - \sum_{r=1}^{k-1}\overline{w}^{j}_{k,r}\varphi^{j}_{r,l},\quad
	\lambda^{j}_{l,k} = w^{j}_{l,k}\quad (j=0,\ldots,n,\ k,l=1,\ldots,m+1,\ k<l).
\]
Then they turn out to be canonical coordinates of a $m(m+1)(n+1)$-dimensional Hamiltonian system.

\begin{prop}
Relations
\begin{align*}
	\{\mu^{j_1}_{k_1,l_1},\lambda^{j_2}_{k_2,l_2}\} = (n+1)\delta_{j_1,j_2}\delta_{k_1,k_2}\delta_{l_1,l_2},\quad
	\{\mu^{j_1}_{k_1,l_1},\mu^{j_2}_{k_2,l_2}\} = 0,\quad
	\{\lambda^{j_1}_{k_1,l_1},\lambda^{j_2}_{k_2,l_2}\} = 0 \\
	(j_1,j_2=0,\ldots,n,\ k_1,k_2,l_1,l_2=1,\ldots,m+1,\ k_1<l_1,\ k_2<l_2)
\end{align*}
hold.
\end{prop}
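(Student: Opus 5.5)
Since brackets between different blocks $j_1\neq j_2$ vanish by \eqref{Eq:Poisson_Borel} and by the previous lemma, I will fix $j$ and drop it throughout. The relation $\{\lambda,\lambda\}=0$ is immediate from $\{w,w\}=0$. That leaves the two brackets involving $\mu_{k,l}=-\varphi_{k,l}-\sum_{r<k}\overline{w}_{k,r}\varphi_{r,l}$. A compact way to organize them is to note that $\mu_{k,l}=-\sum_{r\le k}\overline{w}_{k,r}\varphi_{r,l}$ (with $\overline{w}_{k,k}=1$), i.e.\ $\mu$ is the strictly upper triangular part of $-W_0^{-1}$-row-multiplied $\varphi$. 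All computations will use only the Leibniz rule, \eqref{Eq:Poisson_Borel}, and the previous lemma for $\{\varphi,\overline{w}\}$.

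For $\{\mu_{k_1,l_1},\lambda_{l_2,k_2}\}=\{\mu_{k_1,l_1},w_{l_2,k_2}\}$, I expand
\[
\{\mu_{k_1,l_1},w_{l_2,k_2}\}=-\sum_{r\le k_1}\overline{w}_{k_1,r}\{\varphi_{r,l_1},w_{l_2,k_2}\},
\]
since $\{\overline{w},w\}=0$. By \eqref{Eq:Poisson_Borel} each term is nonzero only if $l_2=l_1$ and $k_2\le r\le l_1-1$, and then equals $-(n+1)w_{r,k_2}$. Hence the bracket is
\[
(n+1)\delta_{l_1,l_2}\sum_{r=k_2}^{k_1}\overline{w}_{k_1,r}w_{r,k_2}=(n+1)\delta_{l_1,l_2}\delta_{k_1,k_2},
\]
because $W_0^{-1}W_0=1$ restricted to lower unitriangular blocks. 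When $k_2>k_1$ the sum is empty. This gives the first relation, up to the index convention $\lambda_{k,l}$ versus $w_{l,k}$ in the statement.

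For $\{\mu_{k_1,l_1},\mu_{k_2,l_2}\}$, I expand
\[
\sum_{r_1\le k_1}\sum_{r_2\le k_2}\{\overline{w}_{k_1,r_1}\varphi_{r_1,l_1},\overline{w}_{k_2,r_2}\varphi_{r_2,l_2}\}
\]
into three kinds of terms. The first kind is $\overline{w}\overline{w}\{\varphi,\varphi\}$. The second is two cross terms $\overline{w}\varphi\{\varphi,\overline{w}\}$, taken with opposite orientation. The term $\{\overline{w},\overline{w}\}$ vanishes. Using the lemma, $\{\varphi_{r_1,l_1},\overline{w}_{k_2,r_2}\}=(n+1)\delta_{r_1,r_2}\overline{w}_{k_2,l_1}$ for $r_1<l_1\le k_2$. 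The cross term then becomes
\[
(n+1)\sum_{r}\overline{w}_{k_1,r}\overline{w}_{k_2,l_1}\varphi_{r,l_2}\quad(l_1\le k_2),
\]
summed over $r\le\min(k_1,l_1-1)$. Using \eqref{Eq:Poisson_Borel}, the $\{\varphi,\varphi\}$ term gives
\[
-(n+1)\Big(\sum_{r_1}\overline{w}_{k_1,r_1}\overline{w}_{k_2,l_1}\varphi_{r_1,l_2}-\sum_{r_2}\overline{w}_{k_2,r_2}\overline{w}_{k_1,l_2}\varphi_{r_2,l_1}\Big),
\]
where the first sum needs $l_1\le k_2$ and the second needs $l_2\le k_1$. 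I expect these to cancel the two cross terms exactly, once the summation ranges are matched. The constraint $r_1<l_1$ from $\varphi_{r_1,l_1}$ being strictly upper is automatic, since $r_1\le k_1<l_1$.

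The main obstacle is this bookkeeping of index ranges in the $\{\mu,\mu\}$ computation: checking that the constraints $r\le k_1$, $r<l_1$ and $l_1\le k_2$ coincide in the cross term and in the $\{\varphi,\varphi\}$ term, so that the cancellation is exact rather than leaving boundary terms with $r=k_1$ or $l_1=k_2$. I would handle it by first treating separately the cases $l_1\le k_2$ and $l_2\le k_1$, which cannot occur simultaneously since $k_i<l_i$. In each case only one cross term and one half of the $\{\varphi,\varphi\}$ term survive, so the comparison reduces to a single identity of sums.
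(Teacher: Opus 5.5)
Your proposal is correct and follows essentially the paper's route: the Leibniz rule together with \eqref{Eq:Poisson_Borel} and the lemma on $\{\varphi,\overline{w}\}$. Absorbing $-\varphi_{k,l}$ into the sum via $\overline{w}_{k,k}=1$ is tidier bookkeeping than the paper's, and the cancellation you expect is exact: for $l_1\le k_2$ the cross term is $(n+1)\overline{w}_{k_2,l_1}\sum_{r\le k_1}\overline{w}_{k_1,r}\varphi_{r,l_2}$ and the matching half of the $\{\varphi,\varphi\}$ term is its negative (symmetrically for $l_2\le k_1$), so no boundary terms survive, and your sign $+(n+1)$ for $\{\mu,\lambda\}$ agrees with the statement, whereas the paper's displayed computation ends with $-(n+1)$ only because it drops the sign of $-\{\varphi_{k_1,l_1},w_{l_2,k_2}\}$.
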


\begin{proof}
We omit the index $j$ of $\varphi^{j}_{k,l}$, $w^{j}_{l,k}$ and $\overline{w}^{j}_{l,k}$ as well as the proof of the previous theorem.
We also temporary use the notation
\[
	\varphi_{k,l} = 0,\quad
	w_{l,k} = 0,\quad
	\overline{w}_{l,k} = 0\quad (l<k).
\]
The first relation is shown as
\begin{align*}
	\{\mu_{k_1,l_1},\lambda_{l_2,k_2}\} &= -\{\varphi_{k_1,l_1},w_{l_2,k_2}\} - \sum_{r=1}^{k_1-1}\{\overline{w}_{k_1,r}\varphi_{r,l_1},w_{l_2,k_2}\} \\
	&= -(n+1)\delta_{l_1,l_2}\left(w_{k_1,k_2}+\sum_{r=k_2}^{k_1-1}\overline{w}_{k_1,r}w_{r,k_2}\right) \\
	&= -(n+1)\delta_{k_1,k_2}\delta_{l_1,l_2}.
\end{align*}

We show the second relation.
It is rewritten as
\begin{equation}\label{Eq:Cano_coor_Lem_Pf_1}\begin{split}
	\{\mu_{k_1,l_1},\mu_{k_2,l_2}\} &= \{\varphi_{k_1,l_1},\varphi_{k_2,l_2}\} + \sum_{r=1}^{k_1-1}\{\overline{w}_{k_1,r}\varphi_{r,l_1},\varphi_{k_2,l_2}\} + \sum_{r=1}^{k_2-1}\{\varphi_{k_1,l_1},\overline{w}_{k_2,r}\varphi_{r,l_2}\} \\
	&\quad + \sum_{r_1=1}^{k_1-1}\sum_{r_2=1}^{k_2-1}\{\overline{w}_{k_1,r_1}\varphi_{r_1,l_1},\overline{w}_{k_2,r_2}\varphi_{r_2,l_2}\}.
\end{split}\end{equation}
The first term of RHS of \eqref{Eq:Cano_coor_Lem_Pf_1} is rewritten as
\begin{align*}
	\{\varphi_{k_1,l_1},\varphi_{k_2,l_2}\} &= -(n+1)\delta_{l_1,k_2}\varphi_{k_1,l_2} + (n+1)\delta_{l_2,k_1}\varphi_{k_2,l_1}.
\end{align*}
The second term of RHS of \eqref{Eq:Cano_coor_Lem_Pf_1} is rewritten as
\begin{equation}\label{Eq:Cano_coor_Lem_Pf_2}\begin{split}
	&\sum_{r=1}^{k_1-1}\{\overline{w}_{k_1,r}\varphi_{r,l_1},\varphi_{k_2,l_2}\} \\
	&= -(n+1)\sum_{r=1}^{k_1-1}\delta_{k_2,r}\overline{w}_{k_1,l_2}\varphi_{r,l_1} - (n+1)\sum_{r=1}^{k_1-1}\delta_{l_1,k_2}\overline{w}_{k_1,r}\varphi_{r,l_2} + (n+1)\sum_{r=1}^{k_1-1}\delta_{l_2,r}\overline{w}_{k_1,r}\varphi_{k_2,l_1} \\
	&= -(n+1)\overline{w}_{k_1,l_2}\sum_{r=1}^{l_1}\delta_{k_2,r}\varphi_{r,l_1} + (n+1)\overline{w}_{k_1,l_2}\sum_{r=k_1}^{l_1}\delta_{k_2,r}\varphi_{r,l_1} - (n+1)\delta_{l_1,k_2}\sum_{r=1}^{k_1-1}\overline{w}_{k_1,r}\varphi_{r,l_2} \\
	&\quad + (n+1)\varphi_{k_2,l_1}\sum_{r=1}^{k_1}\delta_{l_2,r}\overline{w}_{k_1,r} - (n+1)\delta_{l_2,k_1}\varphi_{k_2,l_1}.
\end{split}\end{equation}
If $l_1<k_2$ or $k_1<l_2$, then the fourth or first term of RHS of \eqref{Eq:Cano_coor_Lem_Pf_2} turn out to be zero since $\varphi_{k_2,l_1}=0$ or $\overline{w}_{k_1,l_2}=0$ respectively.
Otherwise we have
\[
	-(n+1)\overline{w}_{k_1,l_2}\sum_{r=1}^{l_1}\delta_{k_2,r}\varphi_{r,l_1} + (n+1)\varphi_{k_2,l_1}\sum_{r=1}^{k_1}\delta_{l_2,r}\overline{w}_{k_1,r} = 0.
\]
If $k_1\leq k_2\leq l_1$, then the second term of RHS of \eqref{Eq:Cano_coor_Lem_Pf_2} turn out to be zero since $\overline{w}_{k_1,l_2}=0$.
Hence we obtain
\[
	\sum_{r=1}^{k_1-1}\{\overline{w}_{k_1,r}\varphi_{r,l_1},\varphi_{k_2,l_2}\} = -(n+1)\delta_{l_1,k_2}\sum_{r=1}^{k_1-1}\overline{w}_{k_1,r}\varphi_{r,l_2} - (n+1)\delta_{l_2,k_1}\varphi_{k_2,l_1}.
\]
Similarly, the third term of RHS of \eqref{Eq:Cano_coor_Lem_Pf_1} is rewritten as
\[
	\sum_{r=1}^{k_2-1}\{\varphi_{k_1,l_1},\overline{w}_{k_2,r}\varphi_{r,l_2}\} = (n+1)\delta_{l_2,k_1}\sum_{r=1}^{k_2-1}\overline{w}_{k_2,r}\varphi_{r,l_1} + (n+1)\delta_{l_1,k_2}\varphi_{k_1,l_2}.
\]
The fourth term of RHS of \eqref{Eq:Cano_coor_Lem_Pf_1} is rewritten as
\begin{equation}\label{Eq:Cano_coor_Lem_Pf_3}\begin{split}
	&\sum_{r_1=1}^{k_1-1}\sum_{r_2=1}^{k_2-1}\{\overline{w}_{k_1,r_1}\varphi_{r_1,l_1},\overline{w}_{k_2,r_2}\varphi_{r_2,l_2}\} \\
	&= -(n+1)\sum_{r=1}^{\min\{k_1-1,k_2-1\}}\overline{w}_{k_2,r}\overline{w}_{k_1,l_2}\varphi_{r,l_1} + (n+1)\sum_{r=1}^{\min\{k_1-1,k_2-1\}}\overline{w}_{k_1,r}\overline{w}_{k_2,l_1}\varphi_{r,l_2} \\
	&\quad - (n+1)\sum_{r_1=1}^{k_1-1}\sum_{r_2=1}^{k_2-1}\delta_{l_1,r_2}\overline{w}_{k_1,r_1}\overline{w}_{k_2,r_2}\varphi_{r_1,l_2} + (n+1)\sum_{r_1=1}^{k_1-1}\sum_{r_2=1}^{k_2-1}\delta_{l_2,r_1}\overline{w}_{k_1,r_1}\overline{w}_{k_2,r_2}\varphi_{r_2,l_1} \\
	&= -(n+1)\overline{w}_{k_1,l_2}\sum_{r=1}^{\min\{k_1-1,k_2-1\}}\overline{w}_{k_2,r}\varphi_{r,l_1} + (n+1)\overline{w}_{k_2,l_1}\sum_{r=1}^{\min\{k_1-1,k_2-1\}}\overline{w}_{k_1,r}\varphi_{r,l_2} \\
	&\quad - (n+1)\overline{w}_{k_2,l_1}\sum_{r=1}^{k_1-1}\overline{w}_{k_1,r}\varphi_{r,l_2} + (n+1)\delta_{l_1,k_2}\sum_{r=1}^{k_1-1}\overline{w}_{k_1,r}\varphi_{r,l_2} \\
	&\quad + (n+1)\overline{w}_{k_1,l_2}\sum_{r=1}^{k_2-1}\overline{w}_{k_2,r}\varphi_{r,l_1} - (n+1)\delta_{l_2,k_1}\sum_{r=1}^{k_2-1}\overline{w}_{k_2,r}\varphi_{r,l_1}.
\end{split}\end{equation}
If $k_1<k_2$, the first and fifth term of RHS of \eqref{Eq:Cano_coor_Lem_Pf_3} turn out to be zero since $\overline{w}_{k_1,l_2}=0$.
Otherwise we have
\[
	-(n+1)\overline{w}_{k_1,l_2}\sum_{r=1}^{\min\{k_1-1,k_2-1\}}\overline{w}_{k_2,r}\varphi_{r,l_1} + (n+1)\overline{w}_{k_1,l_2}\sum_{r=1}^{k_2-1}\overline{w}_{k_2,r}\varphi_{r,l_1} = 0.
\]
The same holds for the second and third term of RHS of \eqref{Eq:Cano_coor_Lem_Pf_3}.
Hence we obtain
\[
	\sum_{r_1=1}^{k_1-1}\sum_{r_2=1}^{k_2-1}\{\overline{w}_{k_1,r_1}\varphi_{r_1,l_1},\overline{w}_{k_2,r_2}\varphi_{r_2,l_2}\} = (n+1)\delta_{l_1,k_2}\sum_{r=1}^{k_1-1}\overline{w}_{k_1,r}\varphi_{r,l_2} - (n+1)\delta_{l_2,k_1}\sum_{r=1}^{k_2-1}\overline{w}_{k_2,r}\varphi_{r,l_1}.
\]
It follows that the right-hand side of \eqref{Eq:Cano_coor_Lem_Pf_1} turns out to be zero.

The third relation follows from definition \eqref{Eq:Poisson_Borel}.
Hence we have proved the proposition.
\end{proof}

Since the obtained Hamiltonian system contains $m$ invariants, we can reduce the number of the dimension by $2m$.

\begin{lem}\label{Lem:Cano_coor_inv}
Relations
\[
	\sum_{j=0}^{n}\sum_{l=1}^{i-1}\lambda^{j}_{i,l}\mu^{j}_{l,i} - \sum_{j=0}^{n}\sum_{k=i+1}^{m+1}\lambda^{j}_{k,i}\mu^{j}_{i,k} + (n+1)\eta_i = 0\quad (i=1,\ldots,m)
\]
hold, where $\eta_i$ are constants.
\end{lem}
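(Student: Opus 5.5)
The plan is to identify the left-hand side as the moment map of a diagonal torus action, and then show that every Hamiltonian $H_i$ is invariant under that action.

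For $i=1,\ldots,m$ set
\[
	F_i = \sum_{j=0}^{n}\sum_{l=1}^{i-1}\lambda^{j}_{i,l}\mu^{j}_{l,i} - \sum_{j=0}^{n}\sum_{k=i+1}^{m+1}\lambda^{j}_{k,i}\mu^{j}_{i,k}.
\]
It suffices to show $\partial_p(F_i)=0$ for $p=1,\ldots,m+1$. Along the similarity reduction, $F_i$ is then a function of the constants of the system (the $\kappa$'s and $\rho$'s), which we may write as $-(n+1)\eta_i$.

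\textbf{Step 1.} Using the canonical relations of the preceding Proposition, compute $\{F_i,\lambda^{j}_{a,b}\}$ and $\{F_i,\mu^{j}_{b,a}\}$. Since $F_i$ is bilinear in the pairs $(\lambda_{a,b},\mu_{b,a})$, each bracket is a multiple of the coordinate itself. The expected weights are $\pm(n+1)(\delta_{a,i}-\delta_{b,i})$ on $\lambda^{j}_{a,b}$ and the opposite weight on $\mu^{j}_{b,a}$. Hence $F_i$ generates the infinitesimal conjugation by
\[
	D_i=\sum_{j=0}^{n}E^{j}_{i,i}
\]
on the coordinates $w^{j}_{l,k}=\lambda^{j}_{l,k}$.

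We then translate this back to $\varphi$ and $\mathcal{M}$ through the definition of $\mu$. Here $\mu_{k,l}=-(W_0^{-1}\varphi)_{k,l}$, and $W_0^{-1}$ is unipotent lower triangular and rescales covariantly. The conclusion should be that $F_i$ generates the flow $\mathcal{M}\mapsto e^{sD_i}\mathcal{M}e^{-sD_i}$, $W_0\mapsto e^{sD_i}W_0e^{-sD_i}$, with all $\kappa$'s fixed. Consequently $\widehat{U}=W_0^{-1}\mathcal{M}W_0$ is conjugated by $e^{sD_i}$. As a consistency check, this agrees with the Lie Poisson bracket \eqref{Eq:Poisson}: there $\{(D_i|\widehat{U}),\cdot\}$ generates exactly $\mathrm{ad}\,D_i$, up to the factor $n+1$.

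\textbf{Step 2.} Show that $H_p=(U_p|\widehat{U})/(2(n+1))$ is invariant under conjugation by $e^{sD_i}$. Since $D_i$ has equal entries on all $n+1$ blocks, $[D_i,\Lambda_p]=0$ for every $p$, and $D_i$ commutes with $\rho$ and $\vartheta$. Therefore the linear relations in Lemma \ref{Lem:DS_SR_Uhat_Ui} expressing $U_p$ through $\widehat{U}$ are equivariant: conjugating $\widehat{U}$ by $e^{sD_i}$ conjugates $U_p$ in the same way. Indeed, each relation pairs entries of the same position $(k,l)$, which carry the same weight. By invariance of the form, $(U_p|\widehat{U})$ is unchanged. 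Hence $\{F_i,H_p\}=0$, and the Corollary (Hamiltonian form \eqref{Eq:DS_SR_Borel_Ham}) gives $\partial_p(F_i)=\{H_p,F_i\}=0$.

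\textbf{Main obstacle.} The delicate point is Step 1, namely the exact signs and the treatment of the $\overline{w}$-correction in $\mu$. If the moment-map identification fails, I would fall back on a direct argument. Expand $(D_i|\widehat{U})=\sum_{j}(E^{j}_{i,i}|W_0^{-1}\mathcal{M}W_0)$ in terms of $\varphi,w,\overline{w}$ and rewrite it through $\mu,\lambda$, expecting
\[
	(D_i|\widehat{U}) = \sum_j \kappa\text{-terms} + \bigl(\text{combination of } \lambda\mu\bigr).
\]
Then use two known facts: $(K_i|\widehat{U})=(K_i|\rho)$ is constant by \eqref{Eq:DS3}, and $\partial_p(\kappa_q)=0$ by the earlier Proposition. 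These give constancy of suitable combinations of the $(D_i|\widehat{U})$, from which $F_i$ would be recovered by taking partial sums in $i$.
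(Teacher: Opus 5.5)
Your main argument is correct, and it takes a genuinely different route from the paper. The paper does not use the Hamiltonian structure at all. It expands $(\overline{K}_i|\widehat{U})=(W_0\overline{K}_iW_0^{-1}|\mathcal{M})$, with $\overline{K}_i=(m+1)\sum_jE^{j}_{i,i}-I$; this is essentially your fallback. The expansion equals $(m+1)F_i$ plus a combination of the $\kappa$'s, and the paper equates it with $(\overline{K}_i|\rho)$ via \eqref{Eq:DS3}. The result is a pointwise identity
\[
F_i=-(n+1)(\rho_{i-1}-\rho_i)+\sum_{j=0}^{n}\bigl(\kappa_{i-1+j(m+1)}-\kappa_{i+j(m+1)}\bigr),
\]
which needs only the gauge relation and $\partial_p(\kappa_q)=0$. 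This explicit value is what fixes $\eta_1,\eta_2$ in Section 4.

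Your Noether argument checks two things instead. First, $F_i$ generates the linear symplectic scaling of $(\mu,\lambda)$ given by conjugation by $e^{s(n+1)D_i}$. The weights do match, and $\mu_{k,l}=-(W_0^{-1}\varphi)_{k,l}$ transforms covariantly. Second, every $H_p$ is invariant, because $[D_i,\Lambda_p]=0$ makes the linear map $\widehat{U}\mapsto U_p$ of Lemma \ref{Lem:DS_SR_Uhat_Ui} equivariant. This avoids the bookkeeping with $W_0\overline{K}_iW_0^{-1}$ and explains where the $m$ invariants come from, namely the block-uniform diagonal torus.

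The cost is that it leans on the Corollary \eqref{Eq:DS_SR_Borel_Ham} and on the canonical-coordinate Proposition. It also yields $\eta_i$ only as constants of integration. In particular, your sentence that $F_i$ is then a function of the $\kappa$'s and $\rho$'s does not follow from Steps 1--2; that is fine for the statement as posed, but not for its later use. You can recover it within your framework. Under \eqref{Eq:Poisson}, induced by \eqref{Eq:Poisson_Borel}, the function $(D_i|\widehat{U})$ generates the same flow as $F_i$. Hence $F_i-(D_i|\widehat{U})$ is a Casimir, so it depends only on the $\kappa$'s because $(\mu,\lambda)$ are canonical. Meanwhile $(D_i|\widehat{U})$ is fixed by \eqref{Eq:DS3} together with $\mathrm{tr}\,\widehat{U}=0$.
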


\begin{proof}
We temporary use a notation
\[
	\overline{K}_i = -\sum_{j=1}^{i-1}jK_j + \sum_{j=i}^{m}(m-j+1)K_i\quad (i=1,\ldots,m).
\]
Note that
\[
	\overline{K}_i = (m+1)\sum_{j=0}^{n}E^{j}_{i,i} - I\quad (i=1,\ldots,m),
\]
where $I$ is the identity matrix.
For each $i=1,\ldots,m$, we consider a normalized invariant form
\[
	(\overline{K}_i|\widehat{U}) = (W_0\overline{K}_iW_0^{-1}|\mathcal{M}).
\]
Then we have
\[
	W_0\overline{K}_iW_0^{-1} = \overline{K}_i + (m+1)\sum_{j=0}^{n}\left(\sum_{k=i+1}^{m+1}\sum_{l=1}^{i-1}w^{j}_{k,i}\overline{w}^{j}_{i,l}E^{j}_{k,l}+\sum_{l=1}^{i-1}\overline{w}^{j}_{i,l}E^{j}_{i,l}+\sum_{k=i+1}^{m+1}w^{j}_{k,i}E^{j}_{k,i}\right).
\]
It implies that
\begin{align*}
	&(W_0\overline{K}_iW_0^{-1}|\mathcal{M}) \\
	&= (\overline{K}_i|\mathcal{M}) + (m+1)\sum_{j=0}^{n}\left(\sum_{k=i+1}^{m+1}\sum_{l=1}^{i-1}w^{j}_{k,i}\overline{w}^{j}_{i,l}(E^{j}_{k,l}|\mathcal{M})+\sum_{l=1}^{i-1}\overline{w}^{j}_{i,l}(E^{j}_{i,l}|\mathcal{M})+\sum_{k=i+1}^{m+1}w^{j}_{k,i}(E^{j}_{k,i}|\mathcal{M})\right) \\
	&= (m+1)\sum_{j=0}^{n}(E^{j}_{i,i}|\mathcal{M}) + (m+1)\sum_{j=0}^{n}\left(\sum_{k=i+1}^{m+1}w^{j}_{k,i}\sum_{l=1}^{i-1}\overline{w}^{j}_{i,l}\varphi^{j}_{l,k}+\sum_{l=1}^{i-1}\overline{w}^{j}_{i,l}\varphi^{j}_{l,i}+\sum_{k=i+1}^{m+1}w^{j}_{k,i}\varphi^{j}_{i,k}\right) \\
	&= -(m+1)\sum_{j=0}^{n}(\kappa_{i-1+j(m+1)}-\kappa_{i+j(m+1)}) - (m+1)\sum_{j=0}^{n}\sum_{k=i+1}^{m+1}\lambda^{j}_{i,k}\mu^{j}_{i,k} + (m+1)\sum_{j=0}^{n}\sum_{l=1}^{i-1}\overline{w}^{j}_{i,l}\varphi^{j}_{l,i}.
\end{align*}
Note that $(I|\mathcal{M})=0$.
Furthermore, we have
\begin{align*}
	\sum_{l=1}^{i-1}\lambda^{j}_{i,l}\mu^{j}_{l,i} &= -w^{j}_{i,1}\varphi^{j}_{1,i} - \sum_{l=2}^{i-1}w^{j}_{i,l}\left(\varphi^{j}_{l,i}+\sum_{r=1}^{l-1}\overline{w}^{j}_{l,r}\varphi^{j}_{r,i}\right) \\
	&= -\sum_{l=1}^{i-1}w^{j}_{i,l}\varphi^{j}_{l,i} - \sum_{r=1}^{i-2}\varphi^{j}_{r,i}\sum_{l=r+1}^{i-1}w^{j}_{i,l}\overline{w}^{j}_{l,r} \\
	&= -\sum_{l=1}^{i-1}w^{j}_{i,l}\varphi^{j}_{l,i} + \sum_{r=1}^{i-2}w^{j}_{i,r}\varphi^{j}_{r,i} + \sum_{r=1}^{i-2}\overline{w}^{j}_{i,r}\varphi^{j}_{r,i} \\
	&= -w^{j}_{i,i-1}\varphi^{j}_{i-1,i} + \sum_{l=1}^{i-1}\overline{w}^{j}_{i,l}\varphi^{j}_{l,i} - \overline{w}^{j}_{i,i-1}\varphi^{j}_{i-1,i} \\
	&= \sum_{l=1}^{i-1}\overline{w}^{j}_{i,l}\varphi^{j}_{l,i}
\end{align*}
for $i>2$.
Note that $\overline{w}^{j}_{i,i-1}=-w^{j}_{i,i-1}$.
On the other hand, relation \eqref{Eq:DS3} implies that
\begin{align*}
	(\overline{K}_i|\widehat{U}) &= (\overline{K}_i|\rho) \\
	&= (m+1)\sum_{j=0}^{n}(E^{j}_{i,i}|\rho) \\
	&= -(m+1)(n+1)(\rho_{i-1}-\rho_i),
\end{align*}
where $\rho_0=0$.
It follows that
\[
	\sum_{j=0}^{n}\sum_{l=1}^{i-1}\lambda^{j}_{i,l}\mu^{j}_{l,i} - \sum_{j=0}^{n}\sum_{k=i+1}^{m+1}\lambda^{j}_{k,i}\mu^{j}_{i,k} = -(n+1)(\rho_{i-1}-\rho_i) + \sum_{j=0}^{n}(\kappa_{i-1+j(m+1)}-\kappa_{i+j(m+1)}).
\]
Its left-hand side turns out to be the constant $-(n+1)\eta_i$.
Hence we have proved the lemma.
\end{proof}

\section{Isomonodromy deformation system}\label{Sec:IMDS}

In this section we focus on the case $(m,n)=(2,1)$.
Then the obtained Hamiltonian system in Section \ref{Subsec:Cano_coor} contains $12$ canonical coordinates
\begin{align*}
	&\lambda_0 = \lambda^{0}_{3,1},\quad
	\lambda_1 = \lambda^{0}_{2,1},\quad
	\lambda_2 = \lambda^{0}_{3,2},\quad
	\lambda_3 = \lambda^{1}_{3,1},\quad
	\lambda_4 = \lambda^{1}_{2,1},\quad
	\lambda_5 = \lambda^{1}_{3,2}, \\
	&\mu_0 = \mu^{0}_{1,3},\quad
	\mu_1 = \mu^{0}_{1,2},\quad
	\mu_2 = \mu^{0}_{2,3},\quad
	\mu_3 = \mu^{1}_{1,3},\quad
	\mu_4 = \mu^{1}_{1,2},\quad
	\mu_5 = \mu^{1}_{2,3}
\end{align*}
and $2$ constraints
\begin{align*}
	&\lambda_0\mu_0 + \lambda_1\mu_1 + \lambda_3\mu_3 + \lambda_4\mu_4 = 2\eta_1, \\
	&\lambda_0\mu_0 + \lambda_2\mu_2 + \lambda_3\mu_3 + \lambda_5\mu_5 = 2\eta_1 + 2\eta_2.
\end{align*}
Hence the Hamiltonian system is essentially $8$ dimensional.
We reduce it to the isomonodromy deformation system with the spectral type $31,22,211,1111$ which is ordinary differential and $6$ dimensional.
In this section we regard the subalgebra $\mathfrak{g}_0$ of $\mathfrak{g}$ as that of $\mathfrak{gl}_{N+1}$.

\begin{lem}\label{Lem:Cano_coor_red}
If we assume that $t_2=t_3=1$, then a relation
\[
	\mu_2 + \mu_5 = 0
\]
holds.
\end{lem}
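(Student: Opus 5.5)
The plan is to find a linear functional of $\widehat{U}$ that vanishes when $t_2=t_3$, and then rewrite that functional in terms of $\mathcal{M}$ and $W_0$ until it becomes $-(\mu_2+\mu_5)$. Here $(m,n)=(2,1)$, so $j\in\mathbb{Z}/2\mathbb{Z}$, and $\mu_2+\mu_5=\sum_{j=0}^{1}\mu^{j}_{2,3}$.

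\emph{Step 1: a vanishing trace.} Apply Lemma \ref{Lem:DS_SR_Uhat_Ui} with $i=3$ and $k=2$, second line:
\[
(E^{j}_{2,3}|\widehat{U}) = t_3(E^{j}_{2,3}|U_3) - t_2(E^{j-1}_{2,3}|U_3).
\]
Summing over $j\in\mathbb{Z}/2\mathbb{Z}$ gives $\sum_j(E^{j}_{2,3}|\widehat{U})=(t_3-t_2)\sum_j(E^{j}_{2,3}|U_3)$. This vanishes under $t_2=t_3=1$. The analogous line with $i=2$, $k=2$ would give the same conclusion; I will use whichever form matches the sign conventions.

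\emph{Step 2: transport to $\mathcal{M}$.} By \eqref{Eq:Gauge_Borel}, $(x|\widehat{U})=(W_0xW_0^{-1}|\mathcal{M})$ for $x\in\mathfrak{g}_0$, since $\widehat{\Lambda}$ has degree $1$ and pairs trivially with degree-$0$ elements. The matrix $W_0$ is blockwise lower unitriangular with entries $w^{j}_{l,k}$, and $W_0^{-1}$ has entries $\overline{w}^{j}_{l,k}$. So $W_0E^{j}_{2,3}W_0^{-1}$ is the outer product of the second column of $W_0$ and the third row of $W_0^{-1}$ in block $j$, that is,
\[
\bigl(E_{2}+w^{j}_{3,2}E_{3}\bigr)\bigl(\overline{w}^{j}_{3,1}E_{1}^{T}+\overline{w}^{j}_{3,2}E_{2}^{T}+E_{3}^{T}\bigr).
\]
Pairing with $\mathcal{M}$ uses only the upper-triangular and diagonal entries of $\mathcal{M}$. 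The off-diagonal upper entries are the $\varphi^{j}_{k,l}$. The diagonal entries are linear combinations of the constants $\kappa$.

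\emph{Step 3: match with $\mu$.} Collect the $\varphi$-terms and use the relations $\overline{w}^{j}_{2,1}=-w^{j}_{2,1}$, $\overline{w}^{j}_{3,2}=-w^{j}_{3,2}$, and $\overline{w}^{j}_{3,1}=w^{j}_{3,2}w^{j}_{2,1}-w^{j}_{3,1}$. The aim is to recognise the combination $-\varphi^{j}_{2,3}-\overline{w}^{j}_{2,1}\varphi^{j}_{1,3}=\mu^{j}_{2,3}$, possibly after using the identity $\sum_r w_{l,r}\overline{w}_{r,k}=\delta_{k,l}$ as in the proof of Lemma \ref{Lem:Cano_coor_inv}.

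\emph{Main obstacle.} The main obstacle is Step 3, namely getting the indices exactly right. The pairing $(E^{j}_{l,k}|\mathcal{M})=\varphi^{j}_{k,l}$ transposes indices, so the correct $\widehat{U}$-entry may be $E^{j}_{3,2}$ rather than $E^{j}_{2,3}$. I would first compute $W_0xW_0^{-1}$ for both candidate entries and choose the one whose pairing with $\mathcal{M}$ produces exactly $\varphi_{2,3}+\overline{w}_{2,1}\varphi_{1,3}$. I would then check that the leftover diagonal $\kappa$-terms and any cross terms (for instance $w_{3,2}\overline{w}_{3,2}$ times a diagonal entry) either cancel in the sum over $j$ or are absent because $W_0$ and $W_0^{-1}$ are triangular. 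If a constant term survives, I would absorb it using Lemma \ref{Lem:Cano_coor_inv} and the $\mathfrak{h}$-part $\mathcal{M}|_{\mathfrak{h}}$. The expectation, however, is that the choice of entry kills the diagonal contribution, which leaves $\sum_j\mu^{j}_{2,3}=0$.
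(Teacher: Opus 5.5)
Your plan works, but only once you commit to the entry $E^{j}_{3,2}$; the entry $E^{j}_{2,3}$ you actually wrote in Steps 1--2 is the wrong one.

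Since the third column of each lower unitriangular block of $W_0$ is a unit vector, $W_0E^{j}_{3,2}W_0^{-1}=E^{j}_{3,2}+\overline{w}^{j}_{2,1}E^{j}_{3,1}$. Hence $(E^{j}_{3,2}|\widehat{U})=\varphi^{j}_{2,3}+\overline{w}^{j}_{2,1}\varphi^{j}_{1,3}=-\mu^{j}_{2,3}$ exactly. No diagonal $\kappa$-terms appear, and Lemma \ref{Lem:Cano_coor_inv} is not needed. The first line of Lemma \ref{Lem:DS_SR_Uhat_Ui} with $(i,k)=(3,2)$ then gives $\mu_2+\mu_5=-\sum_{j=0}^{1}(E^{j}_{3,2}|\widehat{U})=(t_2-t_3)\sum_{j=0}^{1}(E^{j}_{3,2}|U_3)=0$.

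By contrast, $(W_0E^{j}_{2,3}W_0^{-1}|\mathcal{M})$ contains $w^{j}_{3,2}(E^{j}_{3,3}-E^{j}_{2,2}|\mathcal{M})$ plus terms of higher order in the $w$'s. So your Step 1 identity is true, but it is a different relation. Also avoid the third line of Lemma \ref{Lem:DS_SR_Uhat_Ui} with $i=k=2$ as printed. Its coefficient $t_k$ is a misprint for $t_{k+1}$, and as printed it would make the sum vanish for every $t$.

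With this fix, your route differs from the paper's. The paper shares your first step, $\mu_2=-(E_{3,2}|\widehat{U})$. It then inserts the degree-$0$ Sato equation $U_i=\mathrm{ad}\,w_1(\Lambda_i)$ and computes $(E_{3,2}|\widehat{U})$ and $(E_{6,5}|\widehat{U})$ from the explicit matrices $\Lambda_i$. When $t_2=t_3=1$ these are opposite multiples of $(E_{3,5}-zE_{6,2}|w_1)$.

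Your argument instead uses only the compatibility condition $[U_i,\widehat{\Lambda}]=[\widehat{U},\Lambda_i]$, as packaged in Lemma \ref{Lem:DS_SR_Uhat_Ui}. It needs no explicit form of $\Lambda_i$ or $w_1$, and it shows that $t_2=t_3$ alone suffices. It also gives, in the same way, $\sum_{j}\mu^{j}_{k,m+1}=0$ whenever $t_k=t_{m+1}$ for general $(m,n)$. The paper's computation, in exchange, expresses $\mu_2$ explicitly through the dressing coefficient $w_1$.
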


\begin{proof}
We have
\begin{align*}
	\mu_2 &= -\varphi^{0}_{2,3} - \overline{w}^{0}_{2,1}\varphi^{0}_{1,3} \\
	&= -\varphi^{0}_{2,3} + w^{0}_{2,1}\varphi^{0}_{1,3} \\
	&= -(E_{3,2}|\mathcal{M}) + (E_{1,2}|W_0)(E_{3,1}|\mathcal{M}) \\
	&= -(E_{3,2}|W_0\widehat{U}W_0) + (E_{1,2}|W_0)(E_{3,1}|W_0\widehat{U}W_0) \\
	&= -(E_{3,2}|\widehat{U}) - (E_{1,2}|W_0)(E_{3,1}|\widehat{U}) + (E_{1,2}|W_0)(E_{3,1}|\widehat{U}) \\
	&= -(E_{3,2}|\widehat{U}).
\end{align*}
Recall that
\[
	U_i = \mathrm{ad}\,w_1(\Lambda_i)\quad (i=1,2,3).
\]
Then we obtain
\begin{align*}
	\mu_2 &= -(E_{3,2}|\rho+t_1U_1+U_2+U_3) \\
	&= -(E_{3,2}|t_1[w_1,\Lambda_1]+[w_1,\Lambda_2]+[w_1,\Lambda_3]) \\
	&= -(t_1[E_{1,3}+E_{4,1},E_{3,2}]+[E_{2,5}+E_{5,2},E_{3,2}]+[zE_{3,6}+zE_{6,3},E_{3,2}]|w_1) \\
	&= -(E_{3,5}-zE_{6,2}|w_1).
\end{align*}
Similarly, we obtain
\[
	\mu_5 = (zE_{6,2}-E_{3,5}|w_1).
\]
Hence we have proved the lemma.
\end{proof}

Thanks to Lemma \ref{Lem:Cano_coor_inv} and Lemma \ref{Lem:Cano_coor_red}, we obtain
\begin{align*}
	&d\mu_0\wedge d\lambda_0 + d\mu_1\wedge d\lambda_1 + d\mu_2\wedge d\lambda_2 + d\mu_3\wedge d\lambda_3 + d\mu_4\wedge d\lambda_4 + d\mu_5\wedge d\lambda_5 \\
	&= d\mu_0\wedge d\lambda_0 + d\mu_1\wedge d\lambda_1 + d\mu_2\wedge d\lambda_2 \\
	&\quad - d\frac{\lambda_0\mu_0+(\lambda_2-\lambda_5)\mu_2}{\lambda_3}\wedge d\lambda_3 - d\frac{\lambda_1\mu_1-(\lambda_2-\lambda_5)\mu_2}{\lambda_4}\wedge d\lambda_4 - d\mu_2\wedge d\lambda_5 \\
	&= d\lambda_3\mu_0\wedge d\frac{\lambda_0}{\lambda_3} + d\lambda_4\mu_1\wedge d\frac{\lambda_1}{\lambda_4} + d\frac{\lambda_3\mu_2}{\lambda_4}\wedge d\frac{\lambda_4(\lambda_2-\lambda_5)}{\lambda_3}.
\end{align*}
Hence we can take
\[
	p_1 = \frac{\lambda_3\mu_0}{2t_1},\quad
	q_1 = \frac{t_1\lambda_0}{\lambda_3},\quad
	p_2 = \frac{\lambda_4\mu_1}{2t_1},\quad
	q_2 = \frac{t_1\lambda_1}{\lambda_4},\quad
	p_3 = \frac{\lambda_3\mu_2}{2\lambda_4},\quad
	q_3 = \frac{\lambda_4(\lambda_2-\lambda_5)}{\lambda_3}
\]
as a canonical coordinate system of a $6$-dimensional Hamiltonian system.
As a matter of fact, they satisfy
\[
	\{p_i,q_j\} = \delta_{i,j},\quad
	\{p_i,p_j\} = 0,\quad
	\{q_i,q_j\} = 0\quad (i,j=1,2,3).
\]
We also set
\begin{align*}
	&\alpha_0 = \frac12(1+\kappa_5-2\kappa_0+\kappa_1),\quad
	\alpha_1 = \frac12(\kappa_0-2\kappa_1+\kappa_2),\quad
	\alpha_2 = \frac12(\kappa_1-2\kappa_2+\kappa_3), \\
	&\alpha_3 = \frac12(1+\kappa_2-2\kappa_3+\kappa_4),\quad
	\alpha_4 = \frac12(\kappa_3-2\kappa_4+\kappa_5),\quad
	\alpha_5 = \frac12(\kappa_4-2\kappa_5+\kappa_0), \\
	&\eta_1 = -\rho_1 - \frac12(\kappa_0-\kappa_1+\kappa_3-\kappa_4),\quad
	\eta_2 = \rho_1 - \rho_2 - \frac12(\kappa_0-\kappa_2+\kappa_3-\kappa_5),\quad
	t = t_1^2.
\end{align*}
Then we obtain the following theorem by a direct calculation.

\begin{thm}
The canonical coordinates $q_i,p_i\ (i=1,2,3)$ satisfy a Hamiltonian system
\begin{equation}\label{Eq:Even4}
	\frac{dq_i}{dt} = \{H,q_i\},\quad
	\frac{dp_i}{dt} = \{H,p_i\}
\end{equation}
with a Hamiltonian
\begin{align*}
	H = \left.\frac{d}{dz}\frac{\mathrm{tr}((M_{6,0}+zM_{6,1})^4)}{4t(t-1)}\right|_{z\to0},
\end{align*}
where
\begin{align*}
	&M_{6,0} = \begin{pmatrix}
		0 & -p_2 & -p_1 & 1 & 0 & 0 \\
		0 & \alpha_1 & -p_3-q_2p_1 & q_2-1 & 1 & 0 \\
		0 & 0 & \alpha_1+\alpha_2 & q_1-q_3-1 & q_3 & 1 \\
		0 & 0 & 0 & \alpha_1+\alpha_2+\alpha_3 & q_2p_2-q_3p_3+\eta_2 & q_1p_1+q_3p_3-\eta_1-\eta_2 \\
		0 & 0 & 0 & 0 & \alpha_1+\alpha_2+\alpha_3+\alpha_4 & q_1p_1+(q_3+1)p_3-\eta_1-\eta_2 \\
		0 & 0 & 0 & 0 & 0 & \alpha_1+\alpha_2+\alpha_3+\alpha_4+\alpha_5 \\
	\end{pmatrix}, \\
	&M_{6,1} = \begin{pmatrix}
		0 & 0 & 0 & 0 & 0 & 0 \\
		0 & 0 & 0 & 0 & 0 & 0 \\
		0 & 0 & 0 & 0 & 0 & 0 \\
		t & 0 & 0 & 0 & 0 & 0 \\
		-q_2+t & 1 & 0 & 0 & 0 & 0 \\
		-q_1+q_2q_3+t & -q_3 & 1 & 0 & 0 & 0 \\
	\end{pmatrix}.
\end{align*}
\end{thm}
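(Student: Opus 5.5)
The plan is to derive the statement from the Hamiltonian structure already established, namely the Corollary in the Borel picture together with the canonical coordinates of Subsection \ref{Subsec:Cano_coor}, specialized to $(m,n)=(2,1)$. Set $t_2=t_3=1$ and keep $t_1$ as the only free variable. This is legitimate because of the homogeneity relation $\sum_i t_i\partial_i(\widehat{U})+[\rho,\widehat{U}]=0$: it lets us recover the dependence on $t_2,t_3$ from scaling and conjugation by $\rho$, which commutes with the $K_i$-grading. The flow in $t_1$ is then generated by $H_1$. Passing to $t=t_1^2$ gives $d/dt=(2t_1)^{-1}d/dt_1$. The factors $2t_1$ in $p_1,p_2$ and $t_1$ in $q_1,q_2$ must then be absorbed correctly. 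Since this change of variables is explicitly $t$-dependent, the Hamiltonian acquires a correction term of the form $\partial_t(\text{generating function})$, and I will track that term.

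First I will carry out the reduction from the $12$ coordinates $\lambda_i,\mu_i$ to $(q_i,p_i)$. Use Lemma \ref{Lem:Cano_coor_inv} to eliminate $\mu_3,\mu_4$, and Lemma \ref{Lem:Cano_coor_red} to set $\mu_5=-\mu_2$. The symplectic form computation preceding the theorem shows that $(p_i,q_i)$ are Darboux coordinates on the reduced leaf. The remaining coordinates $\lambda_3,\lambda_4$ (and $\lambda_2+\lambda_5$ up to gauge) are cyclic directions, generated by the invariants. The next step is to check that $H_1$ depends on them only through the combinations $q_i,p_i$. I expect this to follow from invariance of $H_1$ under the torus action generated by $\overline{K}_1,\overline{K}_2$ and the $\mathbb{Z}/2$ shift $j\mapsto j+1$, since those actions commute with the flow.

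Second, I will express $H_1=(\mathcal{B}_1-\partial_1(W_0)W_0^{-1}|\mathcal{M})/(2(n+1))$ explicitly. Write $\mathcal{M}$ as the $6\times6$ Borel matrix, with diagonal $\kappa$'s, upper entries $\varphi^j_{k,l}$, and $z$-terms from $W_0\widehat{\Lambda}W_0^{-1}$. Then conjugate by a suitable diagonal-plus-constant gauge depending on $\lambda_3,\lambda_4,t_1$, which puts $\mathcal{M}$ into the form $M_{6,0}+zM_{6,1}$. The diagonal shifts $\alpha_1+\dots$ arise from the normalization of $\kappa_i$ together with the term $\vartheta$. The $\eta$'s enter through the invariants in the $(4,5)$ and $(4,6)$ entries. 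Once $\mathcal{M}$ is in this form, I will identify $H$ with a spectral invariant. In the similarity reduction, the Hamiltonian is a quadratic invariant of $\widehat{U}$ paired with $U_1$. Using Lemma \ref{Lem:DS_SR_Uhat_Ui} to write $U_1$ in terms of $\widehat{U}$, this becomes a coefficient of $\mathrm{tr}$ of a power of the Lax matrix. Here $\mathrm{tr}(\mathcal{M}^4)$ is the natural degree-$4$ invariant, and its $z^1$ coefficient pairs the degree-$0$ part with $\Lambda$. Dividing by $4t(t-1)$ reflects the normalization of $U_1$ relative to $\widehat{U}$ when $t_2=t_3=1$.

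The main obstacle is this identification together with the time-dependent correction. One has to show that $\frac{d}{dz}\mathrm{tr}((M_{6,0}+zM_{6,1})^4)|_{z=0}$ agrees with $4t(t-1)\cdot 2t_1H_1$, up to terms that are constant on the reduced leaf plus the correction from the $t$-dependent canonical transformation. I would first try to verify directly that Hamilton's equations for the given $H$ reproduce $\partial_1$ of $q_i,p_i$ computed from $\partial_1\mathcal{M}=[\mathcal{B}_1,\mathcal{M}]$. This is a finite symbolic computation in $6\times6$ matrices. If the match holds only up to a function of $t$ alone, that discrepancy is harmless.
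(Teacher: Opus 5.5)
Your overall strategy is the paper's: it proves this theorem only ``by a direct calculation'' after restricting to $t_2=t_3=1$, reducing with Lemmas~\ref{Lem:Cano_coor_inv} and~\ref{Lem:Cano_coor_red}, and passing to the Darboux coordinates $(p_i,q_i)$. Your fallback (checking Hamilton's equations symbolically) is exactly that computation. However, two steps of your roadmap are wrong as stated and would fail if you relied on them.

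\textbf{The reduction.} The homogeneity relation removes only one of the three times, so it does not let you recover the $t_2,t_3$ dependence.
\begin{itemize}
\item $t_2=t_3$ is a genuine restriction to a one-parameter slice, and it is exactly what makes Lemma~\ref{Lem:Cano_coor_red} true: redoing that computation with general $t_2,t_3$ gives $\mu_2+\mu_5$ proportional to $t_2-t_3$.
\item The restriction is legitimate simply because the $t_1$-flow closes for fixed $t_2=t_3=1$. Lemma~\ref{Lem:DS_SR_Uhat_Ui} recovers $U_1$ from $\widehat U$ through $2\times2$ systems of determinant $t_1^2-1=t-1$.
\item The $\mathbb{Z}/2$ shift $j\mapsto j+1$ cannot make $H_1$ independent of the continuous gauge direction $\lambda_2+\lambda_5$. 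It is a discrete map, and it even changes the Casimirs ($\kappa_i\mapsto\kappa_{i+3}$).
\end{itemize}
The correct mechanism is that $\mu_2+\mu_5=-(E^0_{3,2}+E^1_{3,2}|\widehat U)$ generates conjugation by $X=E^0_{3,2}+E^1_{3,2}$. This $X$ commutes with $\Lambda_1$, and with $\widehat\Lambda$ precisely when $t_2=t_3$. Now $U_1$ is the unique solution of $[U_1,\widehat\Lambda]=[\widehat U,\Lambda_1]$ supported on the first row and column of each block, with the diagonal normalization of Lemma~\ref{Lem:DS_SR_Uhat_Ui}. Conjugation by $X$ preserves both that shape and that normalization, so $U_1$ is $X$-equivariant. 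Hence $H_1=(U_1|\widehat U)/4$ is $X$-invariant, the constraint $\mu_2+\mu_5=0$ is first class, and $H_1$ descends to the reduced space. Your torus argument with $\overline K_1,\overline K_2$ is fine.

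\textbf{The normalization.} Since $d/dt=(2t_1)^{-1}d/dt_1$, the $t$-Hamiltonian is $H_1/(2t_1)$, not $2t_1H_1$. The explicit $t_1$ in $q_1,q_2,p_1,p_2$ contributes the correction $(q_1p_1+q_2p_2)/(2t)$. So what must be checked on the reduced space is
\[
\frac{1}{4t(t-1)}\left.\frac{d}{dz}\mathrm{tr}\bigl((M_{6,0}+zM_{6,1})^4\bigr)\right|_{z=0}=\frac{H_1}{2t_1}+\frac{q_1p_1+q_2p_2}{2t}+c(t),
\]
with $c(t)$ depending only on $t$ and the constant parameters. Your stated target $4t(t-1)\cdot 2t_1H_1$ is off by a factor $4t$ and would not check out.

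Also, $q_i,p_i$ depend on $W_0$. So besides $\partial_1\mathcal{M}=[\mathcal{B}_1,\mathcal{M}]$ you need $\partial_1W_0$, which is fixed by the condition $\mathcal{B}_1\in\mathfrak{b}_+$. With these corrections, your direct computation is the paper's proof.
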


System \eqref{Eq:Even4} is derived from the deformation problem of a system of linear differential equations
\begin{equation}\label{Eq:Lax_Even4_6th}
	z\frac{d}{dz}\psi_6 = (M_{6,0}+M_{6,1}z)\psi_6.
\end{equation}
Note that the matrices $M_{6,0}$ and $M_{6,1}$ are obtained from $\mathcal{M}$ via a gauge transformation and a replacement of the independent variable.
System \eqref{Eq:Lax_Even4_6th} can be reduced to a Fuchsian system with the spectral type $31,22,211,1111$.
We give its outline below.

Via a gauge transformation
\[
	\psi_6 = z\widetilde{\psi}_6,
\]
a Laplace transformation
\[
	\frac{d}{dz}\widetilde{\psi}_6 \to \zeta\widehat{\psi}_6,\quad
	z\widetilde{\psi}_6 \to -\frac{d}{d\zeta}\widehat{\psi}_6,
\]
and a M\"{o}bius transformation $\zeta\to z^{-1}$, system \eqref{Eq:Lax_Even4_6th} is transformed to
\[
	z\frac{d}{dz}\widehat{\psi}_6 = (I-zM_{6,1})^{-1}(I+M_{6,0})\widehat{\psi}_6,
\]
where $I$ is the identity matrix.
It is reduced to a system with $5\times5$ matrices
\begin{equation}\label{Eq:Lax_Even4_5th}
	z\frac{d}{dz}\psi_5 = (M_{5,0}+zM_{5,1})\psi_5,
\end{equation}
since the first column of the matrix $(I-zM_{6,0})^{-1}(I+M_{6,0})$ turns out to be the zero vector.
Furthermore, via a gauge transformation
\[
	\psi_5(z) \to z^{1+\alpha_1}\widetilde{\psi}_5(z),
\]
a Laplace transformation
\[
	\frac{d}{dz}\widetilde{\psi}_5 \to \zeta\widehat{\psi}_5,\quad
	z\widetilde{\psi}_5 \to -\frac{d}{d\zeta}\widehat{\psi}_5,
\]
and a replacement $\zeta\to z$, system \eqref{Eq:Lax_Even4_5th} is transformed to
\[
	\frac{d}{dz}\widehat{\psi}_5 = (M_{5,1}-zI)^{-1}(I+M_{5,0})\widehat{\psi}_5.
\]
It is reduced to a system with $4\times4$ matrices
\begin{equation}\label{Eq:Lax_Even4_4th}
	\frac{d}{dz}\psi_4 = \left(\frac{M_{4,t}}{z-t}+\frac{M_{4,1}}{z-1}+\frac{M_{4,0}}{z}\right)\psi_4,
\end{equation}
since the first column of the matrix $(M_{5,1}-zI)^{-1}(I+M_{5,0})$ turns out to be the zero vector.
System \eqref{Eq:Lax_Even4_4th} is a Fuchsian and its Riemann scheme is given by
\[
	\left\{\begin{array}{cccc}
		z=t & z=1 & z=0 & z=\infty \\
		0 & 0 & 0 & \alpha_2 \\
		0 & 0 & 0 & \alpha_2+\alpha_3 \\
		0 & -\alpha_1-2\alpha_2-\alpha_3-\alpha_4-\alpha_5-\eta_1-\eta_2 & \alpha_1 & \alpha_2+\alpha_3+\alpha_4 \\
		-\alpha_2-\alpha_3+\eta_1 & -\alpha_1-\alpha_2-\alpha_3-\alpha_4+\eta_2 & \alpha_1 & \alpha_2+\alpha_3+\alpha_4+\alpha_5
	\end{array}\right\}.
\]
We can verify this fact by a direct calculation.

In the last, we give a conjecture about a connection between the DS hierarchy of type $A$ and the isomonodoromy deformation system.

\begin{conj}
Hamiltonian system \eqref{Eq:DS_SR_Borel_Ham} reduces to the isomonodoromy deformation system with the spectral type $\{(m(n+1)-1,1)\times(m+1),(m^{n+1}),(1^{m(n+1)})\}$.
\end{conj}



\end{document}